\documentclass{ecai}
\usepackage{graphicx}
\usepackage{latexsym}
\usepackage{balance} 
\usepackage{hyperref}
\usepackage{enumerate}
\usepackage[inline]{enumitem}
\usepackage{hhline,multirow}%
\usepackage{amsmath,amssymb,amsfonts}%
\usepackage{amsfonts}       
\usepackage{nicefrac}       

\usepackage{booktabs} 
\usepackage[caption = false]{subfig}
\usepackage{float}



\makeatletter
\DeclareRobustCommand{\qed}{%
  \ifmmode 
  \else \leavevmode\unskip\penalty9999 \hbox{}\nobreak\hfill
  \fi
  \quad\hbox{\qedsymbol}}
\newcommand{\openbox}{\leavevmode
  \hbox to.77778em{%
  \hfil\vrule
  \vbox to.675em{\hrule width.6em\vfil\hrule}%
  \vrule\hfil}}
\newcommand{\qedsymbol}{\openbox}
\newenvironment{proof}[1][\proofname]{\par
  \normalfont
  \topsep6\p@\@plus6\p@ \trivlist
  \item[\hskip\labelsep\itshape
    #1.]\ignorespaces
}{%
  \qed\endtrivlist
}
\newcommand{\proofname}{Proof}
\makeatother

\newtheorem{definition}{Definition}
\newtheorem{problem}{Problem}
\newtheorem{lemma}{Lemma}
\newtheorem{example}{Example}
\newtheorem{proposition}{Proposition}

\newtheorem{remark}{Remark}

\def \WWl{\mathcal{W}_{\lambda,C}}
\def \BB{\mathcal{B}}
\def \DD{\mathcal{D}}

\def \QQ{\mathcal{Q}}
\def \EE{\mathcal{E}}

\def \AA{\mathcal{A}}

\def\B{{\mathbb B}}

\def\argmin{{\rm argmin}}
\def\argmax{{\rm argmax}}

\begin{document}

\begin{frontmatter}

\title{A Bilevel Formalism for the Peer-Reviewing Problem}

\author[A]{\fnms{Gennaro}~\snm{Auricchio}\thanks{Corresponding Author. Email: ga647@bath.ac.uk}}
\author[B]{\fnms{Ruixiao}~\snm{Zhang}}
\author[A]{\fnms{Jie}~\snm{Zhang}}
\author[B]{\fnms{Xiaohao}~\snm{Cai}} 

\address[A]{Department of Computer Science, University of Bath}
\address[B]{School of Electronic and Computer Science, University of Southampton}

\begin{abstract}
Due to the large number of submissions that more and more conferences experience, finding an automatized way to well distribute the submitted papers among reviewers has become necessary. We model the peer-reviewing matching problem as a {\it bilevel programming (BP)} formulation. Our model consists of a lower-level problem describing the reviewers' perspective and an upper-level problem describing the editors'. Every reviewer is interested in minimizing their overall effort, while the editors are interested in finding an allocation that maximizes the quality of the reviews and follows the reviewers' preferences the most. To the best of our knowledge, the proposed model is the first one that formulates the peer-reviewing matching problem by considering two objective functions, one to describe the reviewers' viewpoint and the other to describe the editors' viewpoint. We demonstrate that both the upper-level and lower-level problems are feasible and that our BP model admits a solution under mild assumptions. After studying the properties of the solutions, we propose a heuristic to solve our model and compare its performance with the relevant state-of-the-art methods. Extensive numerical results show that our approach can find fairer solutions with competitive quality and less effort from the reviewers.\footnotemark
\end{abstract}

\end{frontmatter}

\stepcounter{footnote}
\footnotetext{Our code website: \url{https://github.com/Galaxy-ZRX/Bilevel-Review}.}

\section{Introduction}

%
The peer-review process is the procedure followed by scientific journals to establish the novelty and correctness of research papers submitted for publication \cite{shah2022challenges,jana2019history}. 
The principle behind this process is simple.
Once the editors of a journal receive a submission, they invite a sufficient number of reviewers to review it and, depending on the reviewers' reports, the editors decide whether the paper meets the criteria for publication or not.
To make this process meaningful, there must be a concordance between the topic of the paper and the expertise of the reviewers, ensuring both the authors of the paper and the editors that the reviewers are competent on the topics discussed in the paper.
Although the peer-review procedure has been proven to be reliable for scientific journals, it shows some flaws when applied to large conferences. 
In this case, all authors need to submit their papers by a specified deadline, after which all the submitted papers will be allocated to some reviewers by the editors.
Since most conferences have only one submission deadline per year, all the papers are submitted in a few days; consequently, the editors need to handle a large number of papers at the same time.
In particular, the editors have to determine the allocation hastily since this allocation process needs to respect the tight schedule of the conference.
Due to the large number of papers the editors usually receive and the lack of time, the allocation process becomes unaffordable through classic means.
To meet this need for time efficiency, several automatized ways to determine a peer-reviewer allocation have been proposed.
As we will detail in the related work below, the vast majority of the approaches assume that the best peer-reviewing matching is the minimum/maximum of a suitable objective function.
This allows rephrasing the whole problem as an optimization problem whose objective encodes an appealing property that the peer-reviewing allocation should have.
Albeit every model finds an allocation following its own criterion, all the models tacitly assume that the editors are the agents that actively decide the allocation.
This is limiting for two reasons.
One is that the efficiency of these methods is bound to how accurate the computation of the objective function is;
the other is that reviewers are often allowed to interfere in the assignment procedure by refusing to review some papers or by bidding for them. 
To represent the interplays between different agents (\textit{i.e.}, editors and reviewers), we propose and study a model that phrases the peer-reviewing allocation problem through a {\it bilevel programming (BP)} formulation.
Bilevel optimization has been proven to be a valuable expedient to describe several allocation problems in a multi-agent framework, such as the ones concerning resource allocation \cite{xu2013bilevel}, traffic engineering  \cite{9170527}, genomic problems \cite{burgard2003optknock}, and Knapsack Problems \cite{denegre2011interdiction}.
The appeal of BP problems lies in the fact that relating the solutions of different optimization problems well captures how the choice of one agent affects the others.
For a complete discussion on the BP problems, we refer the reader to \cite{dempe2002foundations}.

{\bf Related Work.} 
At its core, finding a peer-reviewing matching is a matching problem for indivisible goods.
Matching problems were first introduced in the first half of the twenty century by Hitchcock \cite{Hitchcock1941} and Kantorovich \cite{Kantorovitch1958}.
Ever since they were introduced, this class of problems found several different applications, such as stable marriage \cite{irving1994stable,mcvitie1971stable}, workers and job matching \cite{Easterfield1946,Thorndike1950}, resource allocation \cite{feng2013device,7295474}, and biomedical data analysis \cite{auricchio2018computing}.

It was observed that the first attempt at modeling the peer-reviewing process as a matching problem dates back to $2007$; and then an integer linear programming (ILP) model was proposed, whose solution is an allocation minimizing the possible complaints of the reviewers.
This was done with or without considering the reviewers' expertise \cite{goldsmith2007ai}. 
In \cite{garg2010assigning,lesca2010lp}, the authors improved this model by searching for a fair allocation, \emph{i.e.}, an allocation that equalizes every reviewers' payoff as much as possible. 
In \cite{garg2010assigning}, this is accomplished by maximizing the minimum of the reviewers' payoffs, while in \cite{lesca2010lp}, the authors try to implement classic social inequality functions, such as the Gini's social evaluation function.
Another similar approach is taken in \cite{DBLP:conf/aaai/DickersonSSX19}, where the authors proposed a new category of matching problems, called \emph{diverse matching problems}.
%
%
Despite the differences of those models, they all share the idea that the optimal peer-reviewing allocation is characterizable as the minimum/maximum of an objective function over a suitable restricted set of matching.
Moreover, all these routines are able to detect a solution without gathering data from the reviewers.
Indeed, all the objective functions used by these procedures are assumed to be known \textit{a priori} or to be computable through machine learning means using data from open-access sites, such as Google Scholar.
These concepts are also at the base of the Toronto Paper Matching System (TPMS) \cite{toronto} -- one of the most credited and used peer-reviewing matching procedures.
%
%

%


{\bf Paper Outline.}
In Section \ref{sec:basic_notions}, we briefly review the classic ILP model for the peer-reviewing matching problem.
In Section \ref{sect:bilevel}, we introduce our BP model for the peer-reviewing matching problem.
We describe and study the upper-level problem (ULp) and lower-level problem (LLp) of our formulation and give a set of conditions that ensure the existence of a solution.
%
%
In Section \ref{sec:num_exp}, a heuristic solution is introduced to our BP model and its performance compared with the ones obtained by the relevant state-of-the-art.
%
The results show that our model can find quasi-optimal solutions that are much fairer than the classic ones. 
Finally, we conclude in Section \ref{sec:conclusion} and outline some future research directions.
Due to space limit, we report the proofs of the proposed statements, the discussion on the Secondary Variational Problem, and additional numerical results in the Appendix.

\section{Maximum Edge-weighted Matching formulation}
\label{sec:basic_notions}

In this section, we recall the ILP model on which the TPMS is based \cite{toronto}.
Throughout the paper, we denote with $\mathbb{B}_{nm}$ the set of $n\times m$ binary matrices.
Given $A,B\in\mathbb{B}_{nm}$, we say that $A\le B$ if $a_{i,j}\le b_{i,j}$, $\forall i\in [n], \; j \in [m]$, where $[\ell]$ denotes the set containing the first $\ell$ natural numbers, \emph{i.e.}, $[\ell]:=\{1,2,3,\dots,\ell\}$.

In mathematical terms, peer-reviewing matching can be described as the solution of a maximum edge-weighted matching problem over a bipartite graph $G$.
The two sides of the bipartite graph $G$ are composed of the set of papers $\mathfrak{P}=[n]$ and  the set of reviewers $\mathfrak{R}=[m]$.
The bipartite graph is therefore given by $G=([n]\cup [m], [n]\times [m])$ and any matrix $X\in\mathbb{B}_{nm}$ describes a possible papers allocation.
Following the classic conventions, the edge $(i,j)$ is active, \emph{i.e.}, $X_{i,j}=1$, if and only if paper $i$ is allocated to reviewer $j$.

Since the editors have to ensure a certain amount of reviews for every paper and no reviewer can be overburdened with papers to review, every feasible matching has to comply with some restriction that bounds the number of active edges connected to every vertex of the bipartite graph.
The linear objective function of the problem is determined by the edge weight matrix $W=\{w_{i,j}\}_{(i,j)\in G}$.
The weight $w_{i,j}$ of the edge $(i,j)$ represents the expected quality of the review that paper $i$ receives from reviewer $j$, hence we refer to the matrix $W$ as the \emph{quality} matrix.
In this framework, the optimal peer-reviewing matching is characterized as the solution of the maximum edge-weighted matching problem induced by the quality matrix $W$ over the bipartite graph $G$, since it describes the matching maximizing the overall quality of the reviews.
We below detail the constraints the editors must comply with and how the matrix $W$ is determined.

\paragraph{Problem Constraints.}

We have two sets of constraints, one for each side of the bipartite graph.
Their role is to regulate the number of reviews that every paper receives and to limit the number of reviews that every reviewer can be asked to perform.
In detail, the constraints are as follows.
\begin{enumerate*}[label=(\roman*)]
\item The set of constraints that is imposed over the set of reviewers is $\sum_{i\in [n]} X_{i,j}\le U_j$, $\forall j \in [m]$. 
    To do so, we restrict the number of papers the editors can assign to each reviewer.
    The value $U_j$ is therefore the maximum number of papers that can be allocated to reviewer $j$.
\item The set of constraints that is imposed over the set of papers is $l_i \le \sum_{j\in [m]}X_{i,j}\le u_i$, $\forall i \in [n]$.
    To do so, we bound the editors to assign every paper $i$ to at least $l_i$ and at most $u_i$ reviewers.
    \item We assume every paper is indivisible, imposed by the restriction $X\in\mathbb{B}_{nm}$.
\end{enumerate*}
Finally, the optimal peer-reviewing assignment is retrieved by solving the following ILP problem.
%

%
\begin{problem}
\label{pr:classic}
Given $n,m\in\mathbb{N}$, and a quality matrix $W$, the classic ILP model for the peer-reviewing assignment is 
\begin{align*}
   \max_{X\in\mathbb{B}_{nm}}\langle W,X \rangle,\quad\text{s.t.} \quad l_i \le \sum_{j\in [m]}X_{i,j} \le u_i, \quad \,\sum_{i\in [n]} X_{i,j}\le U_j,
\end{align*}
where the constraints hold for every $i\in [n]$ and $j\in [m]$ and $\langle \cdot , \cdot \rangle$ is the scalar product in the Euclidean space.
\end{problem}

It is easy to see that, as long as $l_i\le u_i$, $\forall i\in [n]$ and $\sum_{i \in [n]}l_i\le \sum_{j\in [m]}U_j$, Problem \ref{pr:classic} is well-defined and admits a solution \cite{edmonds1965maximum}.
%
%

\paragraph{Quality Matrix $W$ and Its Role.}
\label{sec:workload_matrix}
Determining a meaningful quality matrix $W$ is of key importance for every allocation method based on the ILP model described in Problem \ref{pr:classic}.
%
%
For this reason, retrieving a quality matrix $W$ that reliably represents the expertise of the reviewers is still a topic of study and discussion.

One way to retrieve the matrix $W$ is to associate every paper and reviewer to a set of labels that summarizes the contents and the fields of expertise, respectively.
Since both the contents of the paper and the fields of expertise are drawn from the same set of topics,  it is possible to represent these labels as vectors over an $r$ dimensional space, where $r$ is the total number of possible topics.
If we denote with $v_{p_i}$ the vector describing the topics of the paper $i$ and with $v_{r_j}$ the expertise vector of reviewer $j$, we can then compute the quality of the review performed by reviewer $j$ on paper $i$ as the scalar product between $v_{p_i}$ and $v_{r_j}$, \cite{sugiyama2010scholarly}.
This metric does make sense since the more the vectors are aligned, the higher will be their scalar product.
Moreover, if all the vectors have only positive entries, all the scalar products will return a positive value.
Since there is no canonical way to represent papers and reviewers in an $r$ dimensional space, this method highly depends on what embedding we use to translate papers and reviewers into vectors.
Usually, the embedding is determined through neural network structures that use data reported from the reviewer or available from sites on the internet (such as Google Scholar \cite{toronto}).
Despite the remarkable results obtained by those procedures, it is a common belief that evaluating the quality of a review only from the publication records of the reviewer is limiting \cite{toronto}.
Indeed, these models do not consider other factors that are instead related to the effort the reviewer would put in reviewing the paper, such as personal interest, personal conflict, or even time at the disposal.
%
%


\section{Bilevel Formulation of the Assignment Problem}
\label{sect:bilevel}

In this section, we first introduce our BP formulation for the peer-reviewing matching problem, followed by studying the LLp and ULp and showing that our model admits a solution under mild conditions.
%
%
We then highlight the relationship between our model and the classic literature.
Throughout the paper, we assume that the editor in charge is just one person.

\subsection{Bilevel Programming Model}

Given a set of papers $\mathfrak{P}=[n]$ and a set of reviewers $\mathfrak{R}=[m]$, we define our BP peer-reviewing matching problem as follows.

\begin{problem}
\label{problem:bilevel}
Given $l_i$, $u_i$ and $U_j$ as in Section \ref{sec:basic_notions}, we consider the following problem:
\begin{eqnarray}
\label{eq:bivel_prob}
\arraycolsep=2pt\def\arraystretch{1.25}
\begin{array}{ll} 
\underset{Z,X\in \mathbb{B}_{nm}}\max &\;\;\; \langle W_E,X\rangle + \langle Y^*,X \rangle\\
\quad \quad {\rm s.t.} &
 \quad l_i \le\sum_{j\in [m]}X_{i,j}\le u_i, \quad \sum_{i\in [n]}X_{i,j}\le U_j,\\
& \quad\sum_{i\in [n]}Z_{i,j}=U_j+\phi_j,\quad X \le E-Z+Y^*, \\
    & \quad Y_j^* \in\underset{ Y_j\in \mathbb{B}_{n}}\argmin \; \langle (W_R)_j,Y_j\rangle, \\
    &\quad \sum_{i\in [n]}Y_{i,j}=U_j, \quad \text{and}\quad 0\le Y_j\le Z_j,
\end{array} 
\end{eqnarray}
where $E$ is a $n \times m$ matrix whose entries are all equal to $1$.
We denote with $W_E$ the $n\times m$ matrix  describing the qualities of all the possible reviews from the editor's point of view, with $W_R$ the $n\times m$ matrix describing the reviewers' efforts, and with $\phi_j\in \mathbb{N}$ the number of papers that the reviewers can refuse to review.
%
Finally, since we assume that every reviewer bids independently, the LLp is a component-wise minimization, \textit{i.e.}, $Y^*$ is the matrix whose columns minimize the function $Y_{:,j}\to\langle (W_R)_{:,j},Y_{:,j} \rangle$.
%
%
%
\end{problem}

In Table \ref{tab:variables_and_parameter_2}, we report the variables of our model and their meaning.

\begin{remark}
It is well-known that every BP problem describes a Stackelberg game \cite{von1952theory,luo1996mathematical}, and this is no exception.
In our case, the leader is the editor who proposes $Z$ to the followers, which are the reviewers.
Afterwards, the reviewers report a vector $Y$ according to their preferences.
Every couple $(Z,Y)$ uniquely defines a maximum edge-weighted matching problem, whose objective value corresponds to the payoff of the editor\footnote{If the couple $(Z,Y)$ leads to an unfeasible problem, the payoff is $-\infty$.
The reviewers' payoff is the total effort of the papers they bid for.}.
Notice that the BP formulation we introduced is also equivalent to the following three-phase procedure.
First, the editor proposes a set of papers, described by the vector $Z_j$, to every reviewer $j$.
Second, the reviewers select a set $Y_j$ out of the set $Z_j$. 
The vector $Y_j$ represents the papers that reviewer $j$ would review if asked.
Finally, the editor gathers the replies of all the reviewers and searches for a papers-reviewers allocation that maximizes the quality and matches the reports given by the reviewers as much as possible.
For this reason, given any feasible triplet to problem \eqref{eq:bivel_prob}, namely $(X,Y,Z)$, we refer to $Z$ as the editor's proposal, to $Y$ as the reviewers' bidding, and to $X$ as the final assignment.
A graphical description of this procedure is also given in Figure \ref{fig:figures_bis}.
\end{remark}

\begin{figure*}[ht!]

\begin{center}
    \subfloat[The initial complete bipartite graph.]{\label{fig:2_1}\includegraphics[width = 0.16\textwidth]{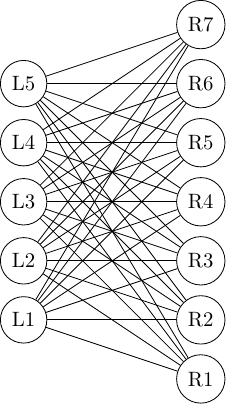}} 
\hfil
\subfloat[The graph describing the proposal $Z$ of the editor.]{\label{fig:2_2}\includegraphics[width = 0.16\textwidth]{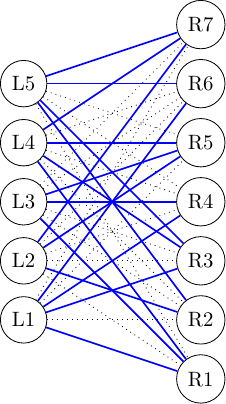}}
\hfil
\subfloat[The graph describing the bidding of the reviewers $Y$.]{\label{fig:2_3}\includegraphics[width = 0.16\textwidth]{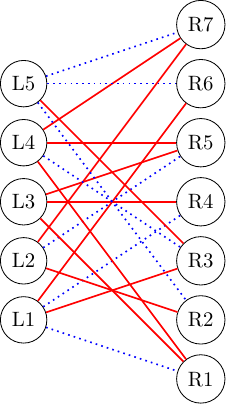}}
\hfil
\subfloat[The graph on which the editor solves the final matching problem.]{\label{fig:2_4}\includegraphics[width = 0.16\textwidth]{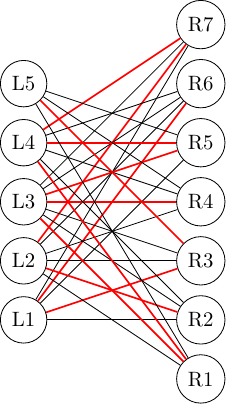}} 
\caption{An example of the allocation procedure described by our BP problem. 
In (\ref{fig:2_1}), we have the complete bipartite graph. The left side contains the papers, while the right side contains the reviewers. 
Reviewers $R6$, $R4$ and $R2$ have to review at most $1$ paper, while the other reviewers have to review at most $2$. 
All reviewers have a degree of freedom $\phi_j$ equal to $1$.
In (\ref{fig:2_2}), we report in blue the editor's proposal $Z$ done by the editor, while the dotted lines represent the edges that have not been proposed by the editor. 
In (\ref{fig:2_3}), we report in red the edges the reviewers' bidding $Y$, while the blue dotted lines represent the reviews that the reviewers have declined.
Finally, in (\ref{fig:2_4}), we report the final graph on which the editor has to solve a maximum edge-weighted matching problem.  \label{fig:figures_bis}
}
\end{center}

\end{figure*}

As in the previous model, we assume that the editor aims to maximize the quality of the peer-reviewing matching.
In our case, however, both the objective function and the constraints of the problem depend on the reviewers' biddings, \textit{i.e.}, the solution of the LLp.
The influence that the LLp has on both the objective function and the set of feasible matching of the ULp is the way in which we model the interaction between the reviewers and the editor.
\begin{table}[t]
    
    \begin{center}
    \caption{List of variables and parameters used in our BP model.}
    \label{tab:variables_and_parameter_2}
        \begin{tabular}{ p{0.35cm}|p{7.45cm}}
     \midrule
     $l_i$ & minimal number of reviews needed by paper $i$\\
     $u_i$ & maximal number of reviews needed by paper $i$\\
     $U_j$ & number of papers that reviewer $j$ bids for and maximum number of papers that $j$ may review\\
     $Z_j$ & vector describing the set of papers proposed to reviewer $j$\\
     $Y_j$ & vector describing the set of papers that reviewer $j$ bids for\\
     $Y^*_j$ & vector that minimizes reviewer $j$'s effort\\
     $X$ & final allocation decided by the editor\\
     $\phi_j$ & degree of freedom of reviewer $j$ \\
     $W_E$ & quality matrix from the editor's point of view\\
     $W_R$ & reviewers' effort matrix\\
     
     \midrule
    \end{tabular}
    \end{center}
     
\end{table}

\begin{remark}[The effort matrix $W_R$ and its role]

To describe the reviewers' point of view, we introduce a matrix $W_R$ that assigns an effort value to every reviewer-paper couple.
%
%
%
When the number of papers is small, it is possible to retrieve $W_R$ by directly asking the reviewers.
As an alternative, we can build a proxy version of $W_R$ using machine learning.
%
Indeed, there are machine learning methods able to find a complete preference order over a set of papers by only having a partial declaration of preferences \cite{toronto}.
In previous models these values were used to enhance the quality matrix $W_E$ by refining the values that are less accurate due to a lack of information about the reviewer.
Instead, in our model, we use the preference order to describe the criterion used by the reviewer during the bidding phase.

\end{remark}


\subsection{Lower-Level Problem}
\label{subsec:llp}

We now consider the LLp, which captures the reviewers' point of view.
We first describe the objective function and the constraints, and then we show that the LLp has a solution and study its properties.

\subsubsection{Reviewer's Objective Function}

Once the reviewers receive the editor's proposal $Z$, they have to bid for part of the papers proposed and report their bidding to the editor.
In the following, we denote with $Y_{j}$ the $j$-th column of the matrix $Y$, so that $Y_j=Y_{:,j}$.
Similarly, $(W_R)_j$ and $Z_j$ denote the $j$-th column of the matrices $W_R$ and $Z$, respectively.
We model the interests of the reviewers through a linear objective function that evaluates the effort required by every reviewer $j$ to review a set of papers $Y_j$.
In this framework, the bidding criterion adopted by reviewer $j$ is captured by the minimization of the objective function $Y_j\to \langle  (W_R)_j, Y_j\rangle$.
Every entry of the reviewers' effort vector $((W_R)_j)_i$ describes how much effort reviewer $j$ would spend to review paper $i$. 
To make the model meaningful, we assume that each entry of the vector $(W_R)_j$ is positive, which means that reviewing is never effortless.

\subsubsection{Reviewer's Constraints}

Every reviewer $j$ has to comply with two sets of constraints.
Since reviewers are anonymous to each other, we assume that the constraints imposed on one reviewer do not depend on the constraints imposed on other reviewers. 
Thus, we describe the constraints only for a fixed reviewer, namely $j$. 
To impose the indivisibility of the reviews, every $Y_j$ is assumed to have binary entries.
Finally, the set of constraints with which the reviewers have to comply is as follows.
\begin{enumerate*}[label=(\roman*)]
\item The \emph{consistency constraint}, that is $Y_j\le Z_j$, where $Z_j$ is the proposal advanced by the editor; this constraint imposes that every reviewer cannot bid for a paper that does not belong to the pool that the editor proposed.
\item The \emph{quantity constraint}, that is $\sum_{i\in [n]}Y_{i,j}= U_j$; this constraint forces every reviewer to bid for $U_j$ papers.
Since the review effort is always non-negative, imposing the quantity constraint is equivalent to imposing the following constraint $\sum_{i\in [n]}Y_{i,j} \ge U_j$.
\end{enumerate*}

\paragraph{Uniqueness of the Solution and the Equivalence Result.}
To conclude, we study the solution of the LLp.
As we will see, the set of conditions needed to ensure the uniqueness of the solution and the equivalence with the relaxed problem is natural to assume.

\begin{lemma}
\label{property} 
Given two integers $0<a<b$, and let $\{w_{k}\}_{k=1,\dots,b}$ be a set of positive and increasingly ordered real values, \textit{i.e.}, $0\leq w_1\leq  \cdots \leq w_a\leq w_{a+1}\leq\cdots\leq w_{b}$. 
If $x_i\in [0,1]$, $\forall i\in[b]$ and $\sum_{i=1}^b x_i=a$, then it holds $\sum_{i=1}^a w_i \leq \sum_{i=1}^b w_i x_i$.
\end{lemma}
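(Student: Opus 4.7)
My plan is to rewrite the desired inequality as a non-negativity claim that follows from a simple ``mass balancing'' argument combined with monotonicity. First I would move the lower sum to the other side and split the expression into two pieces, one supported on $[a]$ and one on $\{a+1,\dots,b\}$:
\[
\sum_{i=1}^{b} w_i x_i \;-\; \sum_{i=1}^{a} w_i \;=\; \sum_{i=a+1}^{b} w_i x_i \;-\; \sum_{i=1}^{a} w_i(1-x_i).
\]
Setting $d_i := 1-x_i \ge 0$ for $i \in [a]$ and $e_i := x_i \ge 0$ for $i \in \{a+1,\dots,b\}$, the normalization $\sum_{i=1}^{b} x_i = a$ rearranges to the ``mass balance'' $\sum_{i=1}^{a} d_i = \sum_{i=a+1}^{b} e_i$; call this common value $m \ge 0$.

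Next I would exploit the monotonicity of the weights. Since $w_i \le w_a$ for all $i \in [a]$ and $w_j \ge w_{a+1} \ge w_a$ for all $j \in \{a+1,\dots,b\}$, both terms can be compared uniformly to $w_a \cdot m$:
\[
\sum_{i=1}^{a} w_i d_i \;\le\; w_a \sum_{i=1}^{a} d_i \;=\; w_a\, m \;\le\; w_{a+1}\, m \;=\; w_{a+1} \sum_{j=a+1}^{b} e_j \;\le\; \sum_{j=a+1}^{b} w_j e_j.
\]
Subtracting the left-most from the right-most quantity yields exactly $\sum_{i=a+1}^{b} w_i x_i - \sum_{i=1}^{a} w_i(1-x_i) \ge 0$, which is the claim.

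An alternative route I considered is the linear-programming viewpoint: the feasible set $\{x \in [0,1]^b : \sum_i x_i = a\}$ is a polytope whose vertices are exactly the binary vectors with $a$ ones, so the linear objective $x \mapsto \sum_i w_i x_i$ attains its minimum at such a vertex, and an elementary exchange argument identifies this vertex as $(1,\dots,1,0,\dots,0)$ by the sorted ordering of the $w_i$. I nevertheless prefer the direct decomposition above, since it avoids invoking extreme-point machinery and makes the role of each hypothesis transparent. I do not anticipate any real obstacle: the only subtlety is correctly identifying the mass-balance relation $\sum_{i=1}^{a} d_i = \sum_{i=a+1}^{b} e_i$, after which monotonicity does all the work.
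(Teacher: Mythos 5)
Your proof is correct and is essentially the same argument as the paper's: both split the sum at index $a$, use $w_i\ge w_a$ on the tail together with the constraint $\sum_i x_i=a$ (your ``mass balance'' $\sum_{i\le a}(1-x_i)=\sum_{i>a}x_i$), and finish with $w_i\le w_a$ on the head. The paper merely packages the same chain of inequalities as $\sum_{i\le a}\bigl((w_a-w_i)(1-x_i)+w_i\bigr)\ge\sum_{i\le a}w_i$ instead of comparing both sides to $w_a m$.
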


We now prove that the ILP problem in the LLp is equivalent to the LP problem obtained by relaxing the constraint $Y_j\in\{0,1\}$.
%
%

\begin{theorem}
\label{lemma-equivelance}
For any given $Z\in\B_{mn}$, the LP problem 
\begin{eqnarray}
\label{lower-level-problem-primal_0}
Y^* \in\underset{0\le Y\le Z}\argmin ~\langle W,Y \rangle,\quad {\rm s.t.} \ \ \sum_{i\in [n]}Y_{i,j}=U_j, 
\end{eqnarray}
and the ILP problem
\begin{equation} 
\label{lower-level-problem-primal}
\widehat Y \in\underset{Y \in \B_{mn}, Y\le Z }{\rm argmin}~\langle  W, Y\rangle,\quad {\rm s.t.} \ \ \sum_{i\in [n]}Y_{i,j}=U_j,
\end{equation}
attain the same optimal value, \textit{i.e.}, $\langle  W, Y^*\rangle=\langle  W, \widehat Y\rangle$. 
Moreover, if $\forall j\in[m]$, the set $\{ W_{i,j}: (i,j) ~\text{s.t.}~ Z_{ij}=1\}$ does not contain two equal values, then the solution is unique and satisfies  $\widehat{Y} \in \B_{mn}$, \textit{i.e.}, Problems \eqref{lower-level-problem-primal_0} and \eqref{lower-level-problem-primal} are equivalent.
\end{theorem}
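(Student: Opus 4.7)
The plan is to exploit the fact that both problems decouple across the columns of $Y$: the objective $\langle W, Y\rangle = \sum_{j\in[m]} \langle W_j, Y_j\rangle$ separates, the bound $Y\le Z$ is entrywise, and the sum constraint $\sum_{i\in[n]} Y_{i,j} = U_j$ couples only within column $j$. So it suffices to minimize each column independently and then reassemble, and I would fix a column $j\in[m]$. Since $Y_{i,j}\le Z_{i,j}\in\{0,1\}$ forces $Y_{i,j}=0$ whenever $Z_{i,j}=0$, I would restrict attention to the active set $S_j = \{i\in[n] : Z_{i,j}=1\}$. The column problem then becomes: minimize $\sum_{i\in S_j} W_{i,j}\,Y_{i,j}$ subject to $\sum_{i\in S_j}Y_{i,j} = U_j$ with each $Y_{i,j}$ lying either in $[0,1]$ (the LP column) or in $\{0,1\}$ (the ILP column). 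Feasibility in both cases requires $U_j \le |S_j|$, which is ensured by the ULp constraint $\sum_{i}Z_{i,j}=U_j+\phi_j$ together with $\phi_j\ge 0$.

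Setting $a = U_j$, $b = |S_j|$, and relabelling the weights $\{W_{i,j}\}_{i\in S_j}$ in non-decreasing order as $w_1\le\cdots\le w_b$, Lemma \ref{property} applies directly and yields that any feasible $Y_j$ in the LP column problem satisfies $\sum_{i\in S_j} W_{i,j}\,Y_{i,j}\ge \sum_{k=1}^a w_k$. Conversely, the binary vector $\widehat Y_j$ that places a $1$ at each of the $a$ indices corresponding to the smallest weights is feasible in the ILP column problem (hence also in the LP column problem) and attains exactly $\sum_{k=1}^a w_k$. As the ILP feasible set sits inside the LP feasible set, sandwiching gives that both columnwise optima coincide; reassembling the columns establishes $\langle W, Y^*\rangle = \langle W, \widehat Y\rangle$, which is the first claim.

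For the uniqueness statement, I would argue by contradiction under the no-ties hypothesis. Suppose some LP minimizer $Y^*$ has a fractional entry $Y^*_{i_1,j}\in(0,1)$ in some column $j$. Because the column sum $U_j$ is an integer, there must exist a second fractional entry $Y^*_{i_2,j}\in(0,1)$ in the same column, and both $i_1,i_2$ lie in $S_j$. By the distinctness of the values $\{W_{i,j}:Z_{i,j}=1\}$, the weights $W_{i_1,j}$ and $W_{i_2,j}$ differ; WLOG $W_{i_1,j}<W_{i_2,j}$. Shifting a sufficiently small mass $\varepsilon>0$ from $Y^*_{i_2,j}$ to $Y^*_{i_1,j}$ preserves every constraint and strictly decreases the objective, contradicting optimality. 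Hence $Y^*$ is integral, and among binary feasible vectors the one selecting the $U_j$ indices of smallest weight is unique (again by distinctness). So $Y^* = \widehat Y \in \B_{mn}$ is the unique minimizer of both problems. The main obstacle here is essentially bookkeeping: once the columnwise decomposition is set up and $S_j$ is introduced, Lemma \ref{property} does the heavy lifting for the equality of optimal values, and uniqueness reduces to the one-line swap argument above.
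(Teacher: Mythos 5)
Your proof is correct and follows the same core strategy as the paper's: decompose the problem column by column, restrict to the active set where $Z_{i,j}=1$, and invoke Lemma \ref{property} to compare fractional and binary solutions. The execution differs in two ways, both in your favour. For the equality of optimal values, the paper takes an optimal (possibly fractional) solution, isolates its fractional support $S$, and uses the lemma to round only on $S$ into a binary feasible point with no larger objective; you instead apply the lemma to the entire active column, which immediately identifies the common optimal value as the sum of the $U_j$ smallest active weights and hands you the optimal binary point explicitly --- a slightly more economical route to the same sandwich. For uniqueness, the paper only remarks that it ``follows from the uniqueness of the non-decreasing ordering,'' which establishes that the \emph{binary} minimizer is unique but does not by itself rule out fractional LP minimizers; your mass-shifting perturbation between two fractional entries (which must coexist in a column because $U_j$ is an integer) closes that gap explicitly and is the argument one would actually want in the paper. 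One small point to tidy up: Lemma \ref{property} is stated for integers $0<a<b$, so you should dispose separately of the degenerate cases $U_j=0$ and $U_j=|S_j|$ (the latter occurring when $\phi_j=0$), where the feasible set is a single binary point and there is nothing to prove.
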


Notice that the hypothesis of Theorem \ref{lower-level-problem-primal} is satisfied whenever every reviewer has a strict preference order over the set of papers.
Finally, we show that minimizing the component-wise reviewers' effort is the same as minimizing the global reviewers' effort.

\begin{theorem}
    \label{prop:prob_simp}
For every given $Z$, $Y$ is a solution of the LLp if and only if $Y$ minimizes the function $Y\to \sum_{i\in [n]}\sum_{j\in [m]} (W_R)_{i,j}Y_{i,j}$ under the constraints $\sum_{i\in [n]}Y_{i,j}-U_j\ge 0$ and $Z_{i,j}-Y_{i,j}\ge 0$.
\end{theorem}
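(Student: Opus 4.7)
The plan is to exploit separability: the global effort function and the listed constraints both decouple across the column index $j$, so the joint problem splits into $m$ independent column problems that are exactly the per-reviewer problems appearing inside the LLp of Problem \ref{problem:bilevel}. First I would rewrite the joint objective as
\[
\sum_{i\in [n]}\sum_{j\in [m]} (W_R)_{i,j} Y_{i,j} \;=\; \sum_{j\in [m]} \langle (W_R)_j, Y_j\rangle,
\]
and note that every constraint mentioned in the theorem, namely $\sum_{i\in [n]} Y_{i,j}-U_j \ge 0$, $Z_{i,j}-Y_{i,j}\ge 0$, together with the implicit non-negativity $Y_{i,j}\ge 0$ coming from $Y\le Z\in\B_{nm}$, involves only entries of a single column $Y_j$. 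Hence the feasible set factorizes as a Cartesian product $\mathcal{F}=\prod_{j\in [m]} \mathcal{F}_j$, where $\mathcal{F}_j=\{Y_j : 0\le Y_j\le Z_j,\ \sum_{i}Y_{i,j}\ge U_j\}$.

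Next I would invoke the elementary fact that for a separable objective over a product feasible set the infimum decomposes as a sum of the per-factor infima, and $Y$ attains the joint minimum if and only if each $Y_j$ attains the corresponding per-column minimum. This immediately gives both directions of the equivalence, provided the per-column feasible sets $\mathcal{F}_j$ agree with those in the LLp of Problem \ref{problem:bilevel}. The only discrepancy is that the LLp uses the equality $\sum_i Y_{i,j}=U_j$, whereas the theorem uses the inequality $\sum_i Y_{i,j}\ge U_j$. Here the positivity assumption $(W_R)_{i,j}>0$ made in Section \ref{subsec:llp} closes the gap: at any per-column minimizer the quantity constraint must be tight, since otherwise one could decrease some active $Y_{i,j}$ while staying feasible and strictly reduce the effort, contradicting optimality.

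No step presents a real obstacle; the only thing worth stating carefully is that the argument relies on two implicit ingredients, namely $Y\ge 0$ (inherited from $Y\le Z\in\B_{nm}$) and the strict positivity of the effort coefficients, both of which have already been adopted in Section \ref{subsec:llp}. Combining the separability argument with the activation of the quantity constraint yields the required if-and-only-if.
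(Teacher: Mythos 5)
Your proposal is correct and follows essentially the same route as the paper: both arguments rest on the column-wise separability of the objective and constraints, the paper phrasing the nontrivial direction as an exchange-and-contradiction argument (swap in a better column and contradict global optimality) where you invoke the generic fact that a separable objective over a product feasible set minimizes factor-by-factor. Your treatment of the equality-versus-inequality form of the quantity constraint is in fact slightly more careful than the paper's, which asserts the equivalence without spelling out that strict positivity of $W_R$ forces the constraint to be tight at any minimizer.
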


%

\subsection{Upper-Level Problem}

We now consider the ULp, which is the problem that captures the editor's point of view.
We first describe the role of both the objective function and the constraints, and
then we show that the solution to the ULp problem always exists under mild assumptions.

\subsubsection{The Editor's Objective Function}

In our model, the editor needs to find an allocation that maximizes a linear objective function, described by the matrix $W_E$ and the scalar product of the variables $X$ and $Y$.
While $\langle W_E,X \rangle$ represents the quality of matching $X$, the scalar product between $X$ and $Y$ describes how much the final assignment $X$ meets the reviewers' bidding $Y$.
%
%
We, therefore, define the following quantity.

\begin{definition}
\label{def:perfect}
Given a solution $(X,Y,Z)$ to the BP Problem \ref{problem:bilevel}, we define its Accordance Percentage as $AC(X,Y,Z):=\frac{\langle X , Y \rangle}{\langle X , X \rangle}$.
\end{definition}

By definition, we have $AC(X,Y,Z)\in [0,1]$ and $AC(X,Y,Z)=1$ if and only if $X\le Y$, \textit{i.e.} if and only if every reviewer receives only papers they bid for.
%
If $AC(X,Y,Z)=1$ holds, we say that the solution $(X,Y,Z)$ is \emph{perfect}.

\subsubsection{The editor's constraints}

In our model, the editor controls two variables, $Z$ and $X$. 
%
%
Both $Z$ and $X$ are $n\times m$ binary matrices that are subject to two different sets of constraints below.
\paragraph{Constraints over $Z$.} We require the proposal $Z$ to comply with only one set of constraints, which we call \emph{freedom constraint}.
The role of these constraints is to determine the number of papers the editor has to propose to every reviewer.
In mathematical terms, this means that $\sum_{i\in [n]}Z_{i,j}=U_j+\phi_j$, $\forall j\in [m]$, where the constants $\phi_j\ge 0$ are the degrees of freedom.
In fact, since the reviewer $j$ has to bid for $U_j$ papers, the parameter $\phi_j$ corresponds to the number of papers that the reviewer can refuse to review.

\paragraph{Constraints over $X$.} 
The final assignment $X$ has to comply with the  constraints as follows.
\begin{enumerate*}[label=(\roman*)]
\item The \emph{feasibility constraints}, that is, $l_i\le\sum_{j\in [m]}X_{i,j}\le u_i$, $\forall i \in [n]$ and $\sum_{i\in [n]} X_{i,j}\le U_j$, $\forall j \in [m]$.
    These constraints bound the editor to assign every paper $i$ to a number of reviewers between $l_i$ and $u_i$ and to assign at most $U_j$ reviews to every reviewer $j$.
    \item The \emph{consistency constraint}, that is $X\le E-Z+ Y^*$, where $Y^*$ is the solution to the LLp and $Z$ is the editor's proposal. 
    This constraint bounds the editor to assign a paper to a reviewer only if the reviewer has not refused it before.
\end{enumerate*}

\paragraph{The Solutions of the ULp}

In the previous section, we proved that the LLp admits a solution, regardless of what the editor proposes.
As the next example shows, the same does not hold for the ULp.

\begin{example}
\label{ex:fairness}
Let us consider the case in which we have three papers, namely $p_1$, $p_2$, and $p_3$, and two reviewers, namely $r_1$ and $r_2$.
We assume every paper needs exactly one review (\textit{i.e.}, $l_i=u_i=1$) and that every reviewer can be asked to review at most $2$ papers (\textit{i.e.}, $U_j=2$).
Finally, we assume that $W_R$ is defined as $W_R=\begin{pmatrix} 11 & 10 & 1\\3 & 2 & 1\end{pmatrix}$.
Both reviewers have the same preference order over the set of papers, \textit{i.e.}, $p_1>p_2>p_3$.
It is then easy to see that, if $\phi_1,\phi_2>0$, no reviewer will give its consensus to review paper $p_3$, making the ULP, and hence the whole BP problem, unfeasible.
\end{example}

Example \ref{ex:fairness} points out that allowing the reviewers to review or refuse too many papers leads to an unsolvable problem in some pathological cases.
In fact, if we ask the second reviewer to review only one paper, that is $U_2=1$, the BP problem described in Example \ref{ex:fairness} becomes feasible.
In general, the editor can ensure the existence of a solution to Problem \ref{problem:bilevel} by tuning the parameters of the problem.

\begin{theorem}
\label{prop:feasibility}
If, in addition to the assumptions that make Problem \ref{problem:bilevel} feasible, it holds $\max_{j\in [m]}\;\phi_j + 2\max_{j\in [m]} \; U_j\le n$, then there exists a feasible triplet to Problem \ref{eq:bivel_prob}.
In particular, under these assumptions, Problem \ref{eq:bivel_prob} has a solution.
\end{theorem}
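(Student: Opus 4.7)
The plan is to exhibit an explicit feasible triplet $(X,Y^\ast,Z)$ by combining the classical-matching feasibility result with a careful choice of the editor's proposal $Z$, engineered so that the papers rejected by the reviewers cannot conflict with a pre-chosen classical assignment.

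First, I would invoke the feasibility result recalled at the end of Section~\ref{sec:basic_notions}: the hypotheses that make Problem~\ref{problem:bilevel} well-posed include $l_i\le u_i$ and $\sum_{i\in[n]}l_i\le\sum_{j\in[m]}U_j$, which are exactly the conditions ensuring that Problem~\ref{pr:classic} admits a solution $\tilde X\in\mathbb{B}_{nm}$ satisfying $l_i\le\sum_{j\in[m]}\tilde X_{i,j}\le u_i$ and $\sum_{i\in[n]}\tilde X_{i,j}\le U_j$. For each reviewer $j$, define $S_j:=\{i\in[n]:\tilde X_{i,j}=1\}$ and observe that $|S_j|\le U_j$.

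Second, I would construct $Z$ column by column, choosing each $Z_j\in\mathbb{B}_n$ to be any binary vector of weight $U_j+\phi_j$ supported on the complement $[n]\setminus S_j$. Such a choice exists precisely because $|[n]\setminus S_j|\ge n-U_j\ge U_j+\phi_j$, where the last inequality uses the hypothesis $\max_{j\in[m]}\phi_j+2\max_{j\in[m]}U_j\le n$. By Theorem~\ref{lemma-equivelance}, the LLp then admits an integer optimizer $Y^\ast$ with $Y^\ast_j\le Z_j$ and $\sum_{i\in[n]}Y^\ast_{i,j}=U_j$.

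Third, I would set $X:=\tilde X$ and verify the remaining ULp constraints. The feasibility and capacity constraints on $X$ are inherited from $\tilde X$. For the consistency constraint $X\le E-Z+Y^\ast$, note that any entry with $X_{i,j}=1$ forces $i\in S_j$, hence $Z_{i,j}=0$ by construction, so $(E-Z+Y^\ast)_{i,j}\ge 1$; the inequality is trivial at entries where $X_{i,j}=0$. Thus $(X,Y^\ast,Z)$ is feasible. The final claim---that a maximizer of~\eqref{eq:bivel_prob} exists---then follows because the feasible set of triplets is finite and nonempty and the upper-level linear objective is bounded.

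The main technical point is seeing exactly where the factor $2$ in $\max_j\phi_j+2\max_j U_j\le n$ is used: it reserves enough room, inside the pool of $n$ papers, for the at most $U_j$ papers already committed to reviewer $j$ by $\tilde X$ together with the $U_j+\phi_j$ further papers needed to form $Z_j$ disjointly from them. Example~\ref{ex:fairness} shows that, without this margin, the rejected slots may be forced to overlap any admissible classical assignment, and no feasible triplet can exist.
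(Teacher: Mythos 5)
Your proof is correct and follows essentially the same route as the paper's: take a solution $\tilde X$ of the classical ILP, choose $Z$ disjoint from the support of $\tilde X$ (i.e.\ $\tilde X\le E-Z$), invoke Theorem~\ref{lemma-equivelance} for the lower-level optimizer $Y^*$, and conclude by finiteness of the feasible set. The only difference is that you make explicit the counting step $n-U_j\ge U_j+\phi_j$ that the paper leaves implicit in the phrase ``we can find a $Z$ such that $X\le E-Z$,'' which is a welcome clarification rather than a deviation.
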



%
Notice that whenever the number of papers $n$ is large enough, the hypothesis of Theorem \ref{prop:feasibility} is likely to be satisfied, since both $U_j$ and $\phi_j$ are parameters bounded to the human capacity.
For example, if $n=30$, the conditions will hold even if we require every reviewer to review at most $8$ papers while allowing everyone to reject up to $10$ papers.
Finally, even when the number of papers is small, the editor can decrease $\phi_j$ and $U_j$ in order to make them suitable.

\subsection{Fairness and the relation with the ILP model}

In Example \ref{ex:fairness}, we showed that when the reviewers share the same preference order over the set of papers, finding a solution to the BP formulation might be impossible. 
In the following, we show that this phenomenon is also due to an intrinsic fairness property that perfect solutions to the BP problem possess.

\begin{definition}
Let $X$ be a matching over the bipartite graph $G$ and let $W_R$ be the reviewers' effort matrix.
Given a vector $\Phi:=(\phi_1,\dots,\phi_m)\in\mathbb{N}^{m}$, we say that $X$ is $\Phi$-weakly fair if, $\forall j\in [m]$, the set $BC_j(X):=\big\{i\in [n], \ \text{s.t.} \ X_{i,j}=0 \ \& \ (W_R)_{i,j} \ge \min_{ X_{i,j}=1}(W_R)_{i,j})\big\}$, contains at least $\phi_j$ elements.
\end{definition}

Notice that the set $BC_j(X)$ contains all the papers in $[n]$ that require more effort than the highest effort-requiring paper assigned to $j$ according to $X$. 
Therefore, a solution is $\Phi$-weakly fair if every reviewer $j$ is not allocated with its $\phi_j$ worst picks.
Using the notion of a perfect solution, we are able to relate $\Phi$-weakly fair solutions of the classic ILP model to the perfect solutions of the BP model.

\begin{proposition}
\label{prop:dots}
There exists a perfect solution $(X,Y,Z)$ that maximizes the quality if and only if the ILP problem admits a $\Phi$-weakly fair solution.
\end{proposition}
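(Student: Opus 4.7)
The plan is to prove the two implications of the equivalence separately: the forward direction is a direct verification, while the backward direction is a constructive argument.

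For the ($\Rightarrow$) direction, suppose $(X,Y,Z)$ is a perfect BP solution maximizing the quality. The constraints of the ULp immediately force $X$ to satisfy the feasibility conditions in Problem~\ref{pr:classic}, so $X$ is feasible for the ILP. To establish $\Phi$-weak fairness, fix a reviewer $j$. By Theorem~\ref{lemma-equivelance}, $Y_j$ is a minimum-effort $U_j$-subset of $Z_j$, so each of the $\phi_j$ indices in $Z_j\setminus Y_j$ satisfies $(W_R)_{i',j}\ge\max_{i\in Y_j}(W_R)_{i,j}$. Perfection gives $\{i:X_{i,j}=1\}\subseteq\{i:Y_{i,j}=1\}$, hence $\max_{Y_j}(W_R)_{\cdot,j}\ge\min_{X_j}(W_R)_{\cdot,j}$ and $X_{i',j}=0$ for every $i'\in Z_j\setminus Y_j$. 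Consequently $Z_j\setminus Y_j\subseteq BC_j(X)$, so $|BC_j(X)|\ge\phi_j$.

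For the ($\Leftarrow$) direction, let $X^*$ be a $\Phi$-weakly fair ILP solution achieving the maximum quality. I build the BP triple explicitly. For each $j$, write $A_j=\{i:X^*_{i,j}=1\}$, choose $\tilde B_j\subseteq BC_j(X^*)$ as the $\phi_j$ items of largest $(W_R)_{\cdot,j}$ inside $BC_j(X^*)$, and choose $E_j\subseteq[n]\setminus(A_j\cup\tilde B_j)$ as the $U_j-|A_j|$ items of smallest $(W_R)_{\cdot,j}$ in that complement. Setting $Z_j:=A_j\cup\tilde B_j\cup E_j$ (of cardinality $U_j+\phi_j$) and $Y_j:=A_j\cup E_j$ (of cardinality $U_j$), I get $X^*\le Y$ by construction, so $AC(X^*,Y,Z)=1$ and the triple is perfect. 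Feasibility of $X^*$ for the ULp is inherited from its ILP-feasibility, the freedom constraint on $Z$ holds by counting, and the consistency constraint $X^*\le E-Z+Y$ reduces to the componentwise statement $A_j\subseteq Y_j$, already assured. Combined with the ($\Rightarrow$) direction, the quality of $X^*$ is the maximum attainable by any perfect BP solution.

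The main obstacle is verifying that the above $Y$ actually solves the LLp for $Z$, i.e., that $Y_j=A_j\cup E_j$ minimises $\langle (W_R)_j,\cdot\rangle$ over all $U_j$-subsets of $Z_j$. The subtle point is that $BC_j(X^*)$ is defined via $\min_{A_j}(W_R)_{\cdot,j}$ rather than $\max_{A_j}(W_R)_{\cdot,j}$, so a priori a costly item of $A_j$ could be swapped out for a cheaper item of $\tilde B_j$, decreasing the reviewer's total effort. I would rule this out in two steps: first, apply Lemma~\ref{property} after re-ordering the efforts in $Z_j$ to reduce the comparison to a prefix-sum inequality on the sorted effort list; second, use the quality-maximality of $X^*$ via an exchange argument—any such swap would induce a modification of $X^*$ that stays ILP-feasible and $\Phi$-weakly fair while strictly increasing $\langle W_E,X\rangle$, contradicting the maximality of $X^*$. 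Together these steps pin down $Y$ as the genuine LLp optimum and conclude the construction.
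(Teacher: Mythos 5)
Your forward direction is correct and in fact more careful than the paper's: you verify that the $\phi_j$ indices in $Z_j\setminus Y_j$ actually satisfy the effort condition defining $BC_j(X)$, using the LLp-optimality of $Y_j$, whereas the paper's proof only counts cardinalities. Your backward construction ($Z_j=A_j\cup\tilde B_j\cup E_j$, $Y_j=A_j\cup E_j$) is essentially the paper's as well, with $\tilde B_j$ playing the role of the paper's set $T_j$.

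The genuine gap is in how you try to close the backward direction. You correctly spot that, with $BC_j(X)$ defined through $\min_{X_{i,j}=1}(W_R)_{i,j}$, an element of $\tilde B_j$ may be cheaper for reviewer $j$ than some element of $A_j$, so the reviewer's best response over $Z_j$ need not contain $A_j$. But your proposed repair --- that such a swap would yield an ILP-feasible, $\Phi$-weakly fair matching with strictly larger $\langle W_E,X\rangle$, contradicting optimality of $X^*$ --- does not work: the swap is advantageous with respect to $W_R$, and $W_E$ and $W_R$ are independent matrices, so nothing forces the editor's quality to move at all, let alone increase. Lemma~\ref{property} cannot rescue this either; it is a prefix-sum inequality on a sorted list and says nothing about whether $A_j\cup E_j$ is the cheapest $U_j$-subset of $Z_j$. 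Indeed, under the literal $\min$-based definition the implication you need is false: with $U_j=2$, $\phi_j=1$, and three papers of efforts $1,10,2$ for reviewer $j$, assigning the first two papers to $j$ gives $BC_j(X^*)=\{3\}$ (effort $2\ge\min(1,10)$), so $X^*$ is $\Phi$-weakly fair, yet the only admissible $Z_j$ is the full set and the reviewer bids for the papers of effort $1$ and $2$, so the paper of effort $10$ can never be assigned in a perfect triple. The proposition (and the paper's own proof, which takes $T_j$ to consist of papers worse than \emph{all} papers allocated to $j$) tacitly uses the prose reading of $BC_j(X)$, i.e.\ $\max$ in place of $\min$. Under that reading your obstacle vanishes without any exchange argument: every element of $\tilde B_j$ costs at least $\max_{i\in A_j}(W_R)_{i,j}$, so at most $U_j-1$ papers of $Z_j$ are cheaper than any given paper of $A_j$, hence $A_j$ is contained in every minimum-effort $U_j$-subset of $Z_j$ and the triple is perfect.
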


To conclude, we show that our model extends the classic ILP formulation described in Problem \ref{pr:classic}.
In particular, whenever we set $\phi_j=0$, $\forall j\in [m]$, the solution of the LLp is actually determined by the editor.
This allows us to find a bijection between the solutions of the classic ILP model and the solutions of our BP model.

\begin{proposition}
\label{prop:dict}
Set $\phi_j=0$, $\forall j\in [m]$.
Then, given a solution $(X,Y,Z)$ of the BP problem, we have that $X$ solves Problem \ref{pr:classic}.
Vice-versa, if $X$ is a solution to Problem \ref{pr:classic}, then all the triplets $(X,Y,Z)$ such that $X\subset Y=Z$ are perfect solutions to the BP problem. 
\end{proposition}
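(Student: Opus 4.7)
The plan is to exploit the fact that setting $\phi_j=0$ collapses the lower-level degree of freedom, making $Y^*$ trivially determined by the editor's proposal, so that Problem \ref{problem:bilevel} reduces to Problem \ref{pr:classic} up to a benign additive term. Concretely, with $\phi_j=0$ the freedom constraint reads $\sum_i Z_{i,j}=U_j$, while the LLp forces $Y_j\le Z_j$ together with $\sum_i Y_{i,j}=U_j$; combining these implies $Y=Z$ componentwise, irrespective of $W_R$. Consequently the consistency bound $X\le E-Z+Y^*=E$ becomes vacuous for any binary $X$, and the ULp reduces to maximising $\langle W_E,X\rangle+\langle Z,X\rangle$ over ILP-feasible $X$ together with $Z\in\mathbb{B}_{nm}$ satisfying $\sum_i Z_{i,j}=U_j$.

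\textbf{Forward direction.} A further key observation is that, for any fixed ILP-feasible $X$ (so in particular $\sum_i X_{i,j}\le U_j$), the editor can always extend $X$ into a $Z\ge X$ with column sums $U_j$, producing $\langle Z,X\rangle=\sum_{i,j}X_{i,j}$, while any $Z$ not dominating $X$ columnwise strictly decreases the inner product. Hence at a BP optimum we may, and will, assume $Z\ge X$, so the BP objective equals $\langle W_E,X\rangle+\sum_{i,j}X_{i,j}$. Since the ULp constraints on $X$ are identical to those of Problem \ref{pr:classic}, any ILP-improvement to $X$ would lift, via the same packing of $Z$ above $X$, to a strictly better BP-triplet, contradicting BP-optimality. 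The $X$-component of any BP-optimum therefore solves Problem \ref{pr:classic}.

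\textbf{Reverse direction.} Conversely, take an ILP-optimum $X$ and any $Z\in\mathbb{B}_{nm}$ with $X\le Z$ and $\sum_i Z_{i,j}=U_j$; such a $Z$ exists because $\sum_i X_{i,j}\le U_j$ by ILP-feasibility. Setting $Y=Z$, all ULp constraints are satisfied and, by the trivialisation above, $Y$ is the unique LLp solution. The triplet is \emph{perfect} because $X\le Y$ yields $\langle X,Y\rangle=\sum_{i,j}X_{i,j}^2=\langle X,X\rangle$, whence $AC(X,Y,Z)=1$. Finally, BP-optimality is inherited from the ILP-optimality of $X$: among all BP-triplets, the packing $Z\ge X$ attains the maximal value of $\langle Z,X\rangle=\sum_{i,j}X_{i,j}$ compatible with $X$, so no other triplet can do strictly better in both summands simultaneously.

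\textbf{Main obstacle.} The delicate point is accounting for the additive $\sum_{i,j}X_{i,j}$ contribution in the reduced BP objective, which has no counterpart in Problem \ref{pr:classic}. This term biases the editor only toward saturating the upper bound $\sum_j X_{i,j}\le u_i$; under the standing positivity of the entries of $W_E$, any ILP-maximum already saturates these bounds, so the additive term neither discards an ILP-optimum nor introduces spurious ones, and the stated correspondence between ILP-optima and perfect BP-triplets is exact.
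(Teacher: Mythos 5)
Your construction follows the paper's proof essentially step for step: with $\phi_j=0$ the quantity constraint $\sum_i Y_{i,j}=U_j$ together with $Y_j\le Z_j$ and $\sum_i Z_{i,j}=U_j$ forces $Y=Z$, the consistency bound $X\le E-Z+Y^*$ trivialises to $X\le E$, the reverse direction packs a $Z\ge X$ with column sums $U_j$ so that $\langle X,Y\rangle=\langle X,X\rangle$ and $AC(X,Y,Z)=1$, and the forward direction is the same exchange argument. Where you go beyond the paper is in isolating the additive term $\langle Z,X\rangle=\sum_{i,j}X_{i,j}$ in the reduced objective and recognising that it has to be reconciled with pure quality maximisation; the paper's proof silently asserts that an ILP optimum ``maximizes both $\langle W_E,X\rangle$ and $\langle Y,X\rangle$'', which is precisely the point you flag.

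The patch you offer in the final paragraph, however, does not hold. The paper makes no standing positivity assumption on $W_E$ (only on $W_R$), and even if $W_E>0$ entrywise an ILP maximiser need not saturate $\sum_j X_{i,j}=u_i$: positivity only forces edge-maximality (no single edge can be added), while the reviewer capacities $\sum_i X_{i,j}\le U_j$ can block saturation, e.g.\ whenever $\sum_j U_j<\sum_i u_i$. More fundamentally, two feasible matchings can carry different totals $\sum_{i,j}X_{i,j}$, so the maximisers of $\langle W_E,X\rangle$ and of $\langle W_E,X\rangle+\sum_{i,j}X_{i,j}$ need not coincide: a small quality gain can be outweighed by a loss of total assignments, which breaks the ``lift any ILP-improvement to a strictly better BP triplet'' step of your forward direction, and the closing remark that ``no other triplet can do strictly better in both summands simultaneously'' is not a valid optimality argument for a sum of two terms. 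The correspondence becomes exact if $l_i=u_i$ for all $i$ (so that $\sum_{i,j}X_{i,j}$ is constant over the feasible set) or if one restricts attention to maximum-cardinality ILP optima; as written, both your argument and the paper's leave this case open, so you have identified a real subtlety but not repaired it.
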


In Appendix B, we delve further into the relationships between our BP model and previous models. 
In particular, we focus on the class of diverse matching proposed in \cite{DBLP:conf/aaai/DickersonSSX19}, show how this specific model is related to a Secondary Variational Problem induced by Problem \ref{pr:classic}, and adapt it to our BP model.
%

\section{Numerical Experiments}
\label{sec:num_exp}

In this section, we report the results of our numerical experiments.
We first introduce the heuristic we use to solve the model.
We then describe the experimental setting, the implementation details, and introduce the metrics used to compare the outcomes of our experiments.
Finally, we comment on the results we find.
%

\subsection{Greedy Heuristic}

%
Due to the complexity of the BP model described in Problem \ref{problem:bilevel}, approaching the problem through Bilevel solvers is prohibitive.
%
%
For this reason, we propose a greedy heuristic that finds a solution by fixing the editor's proposal $Z$.
Indeed, once we fix $Z$, retrieving the reviewers' best response is simple since it is the minimum of $m$ independent LP problems.
Once we retrieve the best reply $Y_Z$, we obtain the final allocation by solving the maximum edge-weighted matching problem induced by $Z$ and $Y$.
For every $Z$ and $Y$, we denote the related final allocation with $X_{Z,Y}$.
Given a quality matrix $W_E$, we define the greedy proposal $Z_g$ as the solution to the following ILP problem
\begin{align*}
    Z_g = & \ \underset{Z\in \mathbf{B}_{nm}}{\argmax}\;\langle W_E,Z  \rangle,\quad\sum_{i\in [n]}Z_{i,j}=U_j+\phi_j,\quad\forall j \in [m].
\end{align*}
In particular, the greedy proposal $Z_g$ is the editor's proposal that maximizes the total quality of the papers proposed.
Following the previous notation, we define the greedy heuristic solution to Problem \ref{problem:bilevel} as $(X_g,Y_g,Z_g):=(X_{Z_g,Y_{Z_g}},Y_{Z_g},Z_g)$.
We notice that to build our heuristic solution, we need to solve at most three ILP problems. 
Therefore, complexity-wise, our approach is as costly as any method based on the resolution of an ILP problem. Moreover,
when the set of reviewers is large and heterogeneous enough, the greedy heuristic always finds a feasible peer review matching.

\begin{theorem}
    \label{thm:feas_cond}
    The greedy heuristic always finds a feasible peer review matching if, for every paper $i \in [n]$, there are at least $L=\sum_{i\in [n]}l_i$ reviewers that do not place $i$ among its top $K:=(\max_{j\in [m]}\phi_j+\max_{j\in [m]}U_j)$ picks.
\end{theorem}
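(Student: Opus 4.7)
The plan is to reduce the feasibility of the final assignment ILP solved by the greedy heuristic to a Hall-type condition on the bipartite subgraph of non-refused paper/reviewer pairs, and then to verify that condition using the hypothesis.

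First, I would unpack the last stage of the heuristic. Once $Z_g$ and the reviewers' best response $Y_{Z_g}$ are fixed, the final ILP searches for a binary matrix $X$ satisfying $l_i\le\sum_j X_{ij}\le u_i$, $\sum_i X_{ij}\le U_j$, and $X\le E-Z_g+Y_{Z_g}$. The last inequality is equivalent to allowing $X_{ij}=1$ only when the pair $(i,j)$ is \emph{available}, meaning either $Z_{g,ij}=0$ (paper $i$ was not proposed to $j$) or $Y_{Z_g,ij}=1$ (reviewer $j$ bid for $i$). Let $H$ be the bipartite subgraph on $[n]\cup[m]$ whose edges are exactly the available pairs. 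The heuristic succeeds if and only if $H$ admits a $b$-matching in which every paper has degree in $[l_i,u_i]$ and every reviewer has degree at most $U_j$.

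Next, I would argue that every paper has at least $L$ available reviewers. By construction, the greedy editor forms the $j$-th column of $Z_g$ by selecting the $U_j+\phi_j$ papers with the largest entries of the $j$-th column of $W_E$, i.e., reviewer $j$'s top $U_j+\phi_j$ papers in the ranking induced by $W_E$. Since $U_j+\phi_j\le K$, reviewer $j$'s top $K$ picks in that ranking contain all of $Z_{g,:,j}$. Hence, whenever paper $i$ is not among $j$'s top $K$ picks, $i\notin Z_{g,:,j}$, so $Z_{g,ij}=0$ and $(i,j)$ is available. The hypothesis supplies at least $L$ such reviewers for every $i$, so each paper has at least $L$ neighbors in $H$.

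With this bound in place, Hall's condition is immediate: for any nonempty $S\subseteq[n]$ and any $i\in S$, the available neighborhood $N_H(S)$ contains the $L$ available reviewers of $i$, so $|N_H(S)|\ge L$. Using $U_j\ge 1$ for every reviewer gives $\sum_{j\in N_H(S)}U_j\ge|N_H(S)|\ge L\ge\sum_{i\in S}l_i$, exactly the inequality needed; total unimodularity of the underlying bipartite flow network then upgrades a fractional $b$-matching to an integer one, yielding the desired $X$. The main obstacle is the identification of papers outside reviewer $j$'s top $K$ picks with papers outside $Z_{g,:,j}$: this relies on reading ``top $K$ picks'' of reviewer $j$ as the top entries of the $j$-th column of $W_E$, the ranking the greedy editor uses when forming $Z_g$. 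Once this is granted, the containment of $Z_{g,:,j}$ in the top-$K$ set follows directly from $|Z_{g,:,j}|=U_j+\phi_j\le K$, and the rest of the argument reduces to a textbook Hall-theorem calculation.
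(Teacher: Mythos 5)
Your proposal is correct and follows essentially the same route as the paper: both exploit that the greedy proposal $Z_g$ gives reviewer $j$ exactly its top $U_j+\phi_j\le K$ papers under $W_E$, deduce that every paper therefore has at least $L$ available reviewers (the paper in fact restricts to $X\le E-Z_g$, ignoring $Y$ altogether), and close with a Hall-type argument — the paper via splitting each paper into $l_i$ copies and each reviewer into $U_j$ copies and applying ordinary Hall's theorem, you via a flow/total-unimodularity formulation. One caution on your final step: the inequality $\sum_{j\in N_H(S)}U_j\ge\sum_{i\in S}l_i$ is necessary but not by itself sufficient for the degree-constrained subgraph, because each paper--reviewer pair may be used at most once (two papers with $l_i=2$ whose only common neighbour has $U_j=4$ satisfy it yet are infeasible); the argument succeeds here only because every \emph{individual} paper has at least $L\ge\sum_{i\in S}l_i$ neighbours, which is the stronger fact that makes the true min-cut condition hold and is precisely what the paper feeds into Hall's theorem on the split graph.
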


Finally, we show that if the accordance percentage of the heuristic solution is equal to $1$, then the quality of the matching $X_g$ lower bounds the quality of the optimal solution to the BP problem.

\begin{theorem}
\label{them:heur_estimate}
Let $(X_g,Y_g,Z_g)$ be the triplet found by the greedy heuristic.
If $AC(X_g,Y_g,Z_g)=1$, then, for every optimal solution $(X^*,Y^*,Z^*)$ to the BP problem, we have  $\langle W_E, X_g\rangle\le \langle W_E, X^*\rangle$.
\end{theorem}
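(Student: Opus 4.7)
The plan is to combine the BP-optimality of $(X^*,Y^*,Z^*)$ with the simplification that the hypothesis $AC(X_g,Y_g,Z_g)=1$ affords for the greedy triplet. Since the greedy heuristic outputs a BP-feasible triplet by construction, optimality of $(X^*,Y^*,Z^*)$ yields
\[
\langle W_E,X^*\rangle + \langle Y^*,X^*\rangle \;\ge\; \langle W_E,X_g\rangle + \langle Y_g,X_g\rangle.
\]

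I would then unpack the two accordance-type terms. The hypothesis $AC(X_g,Y_g,Z_g)=1$ is equivalent to $X_g\le Y_g$ (both matrices are binary), so $\langle Y_g,X_g\rangle = \langle X_g,X_g\rangle = \sum_{i,j}X_{g,i,j}$. For the optimal triplet, $Y^*$ satisfies the LLp quantity constraint $\sum_i Y^*_{i,j}=U_j$, whence $\langle Y^*,X^*\rangle \le \sum_{i,j}Y^*_{i,j} = \sum_j U_j$. Plugging these two bounds into the inequality above produces
\[
\langle W_E,X^*\rangle \;\ge\; \langle W_E,X_g\rangle + \sum_{i,j}X_{g,i,j} - \sum_j U_j,
\]
so the desired conclusion reduces to showing $\sum_{i,j}X_{g,i,j} \ge \sum_j U_j$, that is, that the greedy editor saturates every reviewer's capacity.

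To verify this last step, I would use that $X_g$ maximizes $\langle W_E+Y_g,X\rangle$ subject to $X\le E-Z_g+Y_g$ and the feasibility constraints, while $W_E$ has non-negative entries since it measures review quality. The objective then has strictly positive coefficients on the support of $Y_g$, so the editor will include every bidding edge that does not break feasibility; combined with the slack provided by Theorem \ref{thm:feas_cond}, this forces $X_g=Y_g$ and hence $\sum_{i,j}X_{g,i,j}=|Y_g|=\sum_j U_j$, which closes the chain. The main obstacle I expect is exactly this saturation claim: in principle, the interaction between the blocked edges of $E-Z_g+Y_g$ and the paper-side bounds $l_i\le\sum_j X_{i,j}\le u_i$ could leave some bidding edge infeasible to absorb; ruling this out cleanly will require careful bookkeeping of the slack guaranteed by the hypotheses of Theorem \ref{thm:feas_cond}.
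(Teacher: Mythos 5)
Your opening move matches the paper's: feasibility of the greedy triplet plus optimality of $(X^*,Y^*,Z^*)$ gives $\langle W_E,X_g\rangle+\langle Y_g,X_g\rangle\le\langle W_E,X^*\rangle+\langle Y^*,X^*\rangle$, and $AC(X_g,Y_g,Z_g)=1$ gives $\langle Y_g,X_g\rangle=\langle X_g,X_g\rangle$. The divergence --- and the genuine gap --- is in how you control $\langle Y^*,X^*\rangle$. Bounding it by $\sum_j U_j$, the total reviewer capacity, is far too lossy: it forces you to prove $\sum_{i,j}(X_g)_{i,j}\ge\sum_j U_j$, i.e.\ that the final assignment saturates every reviewer. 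That claim is false in general, not merely delicate. Every feasible assignment satisfies $\sum_{i,j}X_{i,j}\le\sum_i u_i$, so whenever $\sum_i u_i<\sum_j U_j$ (the normal regime, and the one in the paper's own experiments: $n=73$, $u_i=5$, $m=189$, $U_j=8$, hence $365\ll 1512$) no feasible $X$ can saturate all reviewers, and $X_g=Y_g$ is impossible since $Y_g$ has exactly $\sum_j U_j$ active entries. The ``positive coefficients force the editor to absorb every bidding edge'' argument fails for the same reason: the paper-side upper bounds $u_i$ block most bidding edges regardless of the slack in Theorem \ref{thm:feas_cond}. So the last step of your chain cannot be closed by bookkeeping.

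The paper's route instead compares $\langle Y^*,X^*\rangle$ against $\langle X^*,X^*\rangle$ rather than against $\sum_j U_j$: since $X^*$ and $Y^*$ are binary, $\langle Y^*,X^*\rangle=AC(X^*,Y^*,Z^*)\,\langle X^*,X^*\rangle\le\langle X^*,X^*\rangle$, and together with $\langle Y_g,X_g\rangle=\langle X_g,X_g\rangle$ the required sign condition reduces to comparing accordance percentages, $AC(X_g,Y_g,Z_g)=1\ge AC(X^*,Y^*,Z^*)$. (Even there the paper's rearrangement is terse: turning a difference of inner products into a difference of accordance percentages implicitly needs the two assignments to carry comparable numbers of active edges, e.g.\ $\langle X_g,X_g\rangle\ge\langle X^*,X^*\rangle$, which holds for instance when $l_i=u_i$. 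But that normalization, not reviewer saturation, is the missing ingredient in your argument.)
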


\subsection{Experimental Framework}
\label{subsect:exp}
We now detail the dataset, the parameters, and the solver we used during the implementation of our experiments.

{\bf Dataset.} We use the multi-aspect review assignment evaluation dataset \cite{karimzadehgan2008multi,karimzadehgan2009constrained}, which is a classic benchmark dataset from UIUC.
The dataset contains the information of $73$ papers accepted by SIGIR $2007$, as well as $189$ prospective reviewers who had published in the main information retrieval conferences. 
Moreover, the dataset provides $25$ major topics and, for each paper in the set, it provides a $25$-dimensional label on that paper based on a set of defined topics. 
Similarly, for each of the $189$ reviewers, a $25$-dimensional expertise representation is provided.
%
%
To the best of our knowledge, there is no available dataset from where we could retrieve the effort matrices needed for our tests.
For this reason, we rely on two synthetically generated datasets as follows. 
\begin{enumerate*}[label=(\roman*)]
\item The \textit{Aligned dataset}, in which there is a linear dependence between $W_R$ and $W_E$. We assume that $(W_R)_{i,j}=K-\big((W_E)_{i,j}+\chi)$, where $K$ is a constant ensuring $(W_R)_{i,j}>0$ and $\chi\sim \mathcal{N}(0,\sigma)$. 
In this case, maximizing the quality of the assignment and minimizing the total reviewers' effort have the same objective function up to a Gaussian noise $\chi$, which describes the personal preferences the reviewers may have over papers from the same area.
\item The \textit{Random dataset}, in which every entry of $W_R$ is defined as $(W_R)_{i,j}=\chi$, where $\chi$ is a random variable. 
In our test, we consider two possible laws for $\chi$: the Uniform distribution $\mathcal{U}[0,1]$ and the Exponential distribution $Exp(0.5)$.
%
%
In this scenario, the effort needed by any reviewer does not depend on the paper, but rather on external factors not related to the expected quality of the review, such as time at disposal and conflict of interest.
%
%
\end{enumerate*}

\begin{table*}[ht!]
\begin{center}

\caption[Experiments]{Comparison between $X_{BP}$ and $X_{ILP}$. Quantitative results for different values of $\phi$ and differently generated effort matrices. 
Every column represents a different framework and is characterized by the effort matrix, the random variable used to generate the matrices, and the degree of freedom.
For each framework, we report the averages over $250$ instances of the Quality Percentage, the Reviewers' Average Effort Ratio, the Fairness Ratio, and the Accordance Percentage of the heuristic solution.
%
}

\begin{tabular}{  c | @{}c@{}|@{}c@{} | @{}c@{}|@{}c@{} | @{}c@{}|@{}c@{}|@{}c@{}|@{}c@{}|@{}c@{}|@{}c@{} |@{}c@{} | @{}c@{} }
  \hline
  & \multicolumn{6}{c |}{Aligned } & \multicolumn{6}{ c }{Random } \\
  \cline{2-13}
  U=8 & 
  \multicolumn{3}{c|}{$\sigma=0.1$} & \multicolumn{3}{c |}{$\sigma=0.3$} & \multicolumn{3}{c|}{$X\sim \mathcal{U}[0,1]$}  & \multicolumn{3}{c }{$X\sim Exp(0.5)$} \\
  \cline{2-13}
  & $\;\;\phi=4\;\;$ & 
  $\;\;\phi=6\;\;$ & $\;\;\phi=8\;\;$ & 
  $\;\;\phi=4\;\;$ & $\;\;\phi=6\;\;$ & 
  $\;\;\phi=8\;\;$ & $\;\;\phi=4\;\;$ & 
  $\;\;\phi=6\;\;$ & $\;\;\phi=8\;\;$ & 
  $\;\;\phi=4\;\;$ & $\;\;\phi=6\;\;$ & 
  $\;\;\phi=8\;\;$ \\
  \hline
  $QP(X_{BP})$ & 0.99 & 0.99 & 0.99 & 0.98 & 0.98 & 0.97 & 0.96 & 0.95 & 0.93 & 0.97 & 0.96 & 0.94\\
  $RAER(X_{BP},X_{ILP})$ & 0.89 & 0.89 & 0.86 & 0.87 & 0.87 & 0.80 & 0.63 & 0.53 & 0.47 &  0.45 & 0.35 & 0.30   \\
  $FR(X_{BP},X_{ILP})$ & 0.65 & 0.75 & 0.57 & 0.61 & 0.61 & 0.48 & 0.35 & 0.24 & 0.19  & 0.16 & 0.09 & 0.07   \\
  $AC$ & 0.99 & 0.99 & 0.99 & 0.99 & 0.99 & 0.99 & 0.99 & 0.99 & 0.99 & 0.99 & 0.99 & 0.99 \\
  \hline
 
\end{tabular}
\label{tab:downsample_tr}
\end{center}
\end{table*}

\begin{table*}[ht!]
\begin{center}
\caption[Experiments]{Comparison between $X_{BP}$ and $X^{(t)}_{ILP}$. Quantitative results for different values of $\phi$ and differently generated effort matrices. 
Every column represents a different framework and is characterized by the effort matrix, the random variable used to generate the matrices, and the degree of freedom.
For each framework, we report the averages over $250$ instances of the Quality Percentage, the Reviewers' Average Effort Ratio, and the Fairness Ratio.
%
}


\begin{tabular}{  @{}c | c | @{}c@{}|@{}c@{} | @{}c@{}|@{}c@{} | @{}c@{}|@{}c@{}|@{}c@{}|@{}c@{}|@{}c@{}|@{}c@{} |@{}c@{} | @{}c@{} }
  \hline
   \multicolumn{2}{c |}{ } &  \multicolumn{6}{c |}{Aligned } & \multicolumn{6}{ c }{Random } \\
  \cline{3-14}
  \multicolumn{2}{c |}{ U=8 } & 
  \multicolumn{3}{c|}{$\sigma=0.1$} & \multicolumn{3}{c |}{$\sigma=0.3$} & \multicolumn{3}{c|}{$X\sim \mathcal{U}[0,1]$}  & \multicolumn{3}{c }{$X\sim Exp(0.5)$} \\
  \cline{3-14}
  \multicolumn{2}{c |}{} & $\;\;\phi=4\;\;$ & 
  $\;\;\phi=6\;\;$ & $\;\;\phi=8\;\;$ & 
  $\;\;\phi=4\;\;$ & $\;\;\phi=6\;\;$ & 
  $\;\;\phi=8\;\;$ & $\;\;\phi=4\;\;$ & 
  $\;\;\phi=6\;\;$ & $\;\;\phi=8\;\;$ & 
  $\;\;\phi=4\;\;$ & $\;\;\phi=6\;\;$ & 
  $\;\;\phi=8\;\;$ \\
  \hline
  \parbox[t]{2mm}{\multirow{3}{*}{\rotatebox[origin=c]{90}{$t=0.05$}}}& $QP(X_{ILP}^{(t)})$ & 0.99 & 0.99 & 0.99 & 0.99 & 0.99 & 0.99 & 0.99 & 0.99 & 0.99 & 0.99 & 0.99 & 0.99\\
  & $RAER(X_{BP},X_{ILP}^{(t)})$ & 0.96 & 0.95 & 0.95 & 0.93 & 0.91 & 0.89 &  0.72 & 0.60 & 0.53 & 0.59 & 0.47 & 0.39 \\
  & $FR(X_{BP},X_{ILP}^{(t)})$ & 0.77 & 0.77 & 0.76 & 0.70 & 0.65 & 0.62  & 0.43 & 0.29  & 0.23  & 0.26 & 0.16 & 0.11 \\
  \hline
  \parbox[t]{2mm}{\multirow{3}{*}{\rotatebox[origin=c]{90}{$t=0.1$}}}& $QP(X_{ILP}^{(t)})$ & 0.99 & 0.99 & 0.99 & 0.99 & 0.99 & 0.99 & 0.99 & 0.99 & 0.99 & 0.99 & 0.98 & 0.98\\
  & $RAER(X_{BP},X_{ILP}^{(t)})$ & 0.95 & 0.95 & 0.95 & 0.95 & 0.93 & 0.91 &  0.78 & 0.66  & 0.58 & 0.71 & 0.57 & 0.47 \\
  & $FR(X_{BP},X_{ILP}^{(t)})$ & 0.78 & 0.77 & 0.77 & 0.72 & 0.66 & 0.63  & 0.47 & 0.33  & 0.25  & 0.37 & 0.22 & 0.15 \\
  \hline
  \parbox[t]{2mm}{\multirow{3}{*}{\rotatebox[origin=c]{90}{$t=0.15$}}}& $QP(X_{ILP}^{(t)})$ & 0.99 & 0.99 & 0.99 & 0.99 & 0.99 & 0.99 & 0.98 & 0.98 & 0.98 & 0.98 & 0.97 & 0.97\\
  & $RAER(X_{BP},X_{ILP}^{(t)})$ & 0.96 & 0.95 & 0.95 & 0.96 & 0.95 & 0.92 &  0.88 & 0.75  & 0.66 & 0.86 & 0.68 & 0.57 \\
  & $FR(X_{BP},X_{ILP}^{(t)})$ & 0.79 & 0.77 & 0.77 & 0.74 & 0.69 & 0.65  & 0.55 & 0.38  & 0.30  & 0.53 & 0.32 & 0.22 \\
  \hline
\end{tabular}
\label{tab:downsample}
\end{center}
\end{table*}

{\bf Experiment Setting.} 
%
For every instance in our synthetic dataset, we use different methods to retrieve feasible peer-reviewer matchings.
In particular, we consider:
\begin{enumerate*}[label=(\roman*)]
\item The matching obtained by solving the ILP problem described in Problem \ref{pr:classic}.
We call this solution \textit{Pure-Quality solution} and denote it with $X_{ILP}$.
\item The matching obtained by the greedy heuristic we presented above. We denote this matching with $X_{BP}$.
\item The matching obtained by considering an ILP problem whose objective function tries to maximize the Quality and minimize the Reviewers' Effort at the same time. 
Given $t$, we say that $X$ is a \textit{$t$-tuned solution} if it maximizes $\langle (1-t)W_E+t(-W_R),X \rangle$ under the constraints $l_i \le \sum_{j\in [m]}X_{i,j} \le u_i$ and $\sum_{i\in [n]} X_{i,j}\le U_j$.\footnote{We take $-W_R$ instead of $W_R$ because the LLp is a minimization problem.}
We use $X_{ILP}^{(t)}$ to denote the $t$-tuned solution.
%
%
%
\end{enumerate*}

%
Finally, we fix the parameters we use in our simulations.
%
We set the values $u_i$ and $l_i$ respectively to be $3$ and $5$, $\forall i\in [n]$.
We assume $U_j$ does not depend on $j\in [m]$ and let $U_j$ range in $\{6, 8\}$.
Similarly, we assume $\phi_j$ does not depend on $j\in [m]$ and let its value vary in a set that depends on the value of $U_j$.
We assume the degree of freedom to be proportional to the value $U_j$.
In particular, we consider $\phi_j$ equals to the $50\%$, the $75\%$, and the $100\%$ of $U_j$, and thus, for $U_j=8$, we have $\phi\in\{4, 6, 8\}$.
In the Aligned case, we let the variance of the Gaussian noise, namely $\sigma$, vary in the set $\{0.1,0.3\}$, since we want $X$ to be only a small influence.
Due to the fact that the main goal of the editor is to retrieve high quality solutions and since there does not exist a unique best parameter $t$ to retrieve the $t$-tuned solutions, we let $t$ range in $\{0.05,0.1,0.15\}$.
In that regard, it is worth noticing that the best parameter $t$ might change depending on the dimensions of the problem or on the reviewers' characteristics.
Finally, since all the effort matrices are randomly generated, we run the experiment $250$ times for every set of parameters and report the average of our results.
Due to space limits, we report only the results for the case of $U_j=8$.
All the missing results are deferred in Appendix C.

{\bf Implementation Details.} Our experiments are conducted in the Red Hat Linux Server with Intel Xeon Gold 6138 CPUs.
We use Julia \cite{Julia-2017} as the main coding language and solve the ILP problems by using the JuMP package and HiGHS as an optimization solver \cite{DunningHuchetteLubin2017}.

\subsection{Comparison Metrics}
The metrics we use to evaluate our results are the \textit{Quality Percentage ($QP$)}, the \textit{Reviewers'Average Effort Ratio ($RAER$)}, and the \textit{Fairness Ratio ($FR$)}, which we detail in the following.
{\bf Quality Percentage.}
Given a peer-reviewing allocation $X$ and a quality matrix $W_E$, we define the quality of $X$ as $\QQ(X):=\langle W_E,X \rangle$.
Given a solution $X_{ILP}$ of the ILP problem and a matching $X$ found, we define the $QP$ as $QP(X):=\frac{\QQ(X)}{\QQ(X_{ILP})}$.
Since $X_{ILP}$ does maximize the quality over the set of feasible matchings, we have that $QP(X)\in [0,1]$ for every feasible matching.
%
%
%

%
{\bf Reviewers' Average Effort Ratio.}
Given a peer-reviewing allocation $X$ and an effort matrix $W_R$, we define the effort required to reviewer $j$ from $X$ as $\rho_j(X):=\langle (W_R)_{:,j}, X_{:,j} \rangle$ and say that $j$ is \textit{active} if $\rho_j(X)>0$.
We define the total reviewers' effort required by $X$ as $\EE(X):=\langle W_R,X \rangle=\sum_{j\in [m]}\rho_j(X)$.
Finally, we define the average effort required by the active reviewers as $\EE_{avg}(X):=\frac{\EE(X)}{n_{act}(X)}$, where $n_{act}(X)$ is the number of active reviewers according to $X$.
Given two peer-reviewers allocations $X$ and $X'$, we define the Reviewers' Average Effort Ratio as $RAER(X,X')=\frac{\EE_{avg}(X)}{\EE_{avg}(X')}$.
%
%

%
{\bf Fairness Ratio.} Given an allocation matrix $X$, let $\rho$ be the $n_{act}(X)$-dimensional vector containing the efforts of all the active reviewers.
To measure how fairly spread the effort induced by $X$ is, we use the variance of vector $\rho$, denoted as $\theta(X)$ \cite{jain1984quantitative}.
Given two peer-reviewers allocations $X$ and $Y$, we define the Fair Ratio between $X$ and $X'$ as $FR(X,X')=\frac{\theta(X)}{\theta(X')}$.
%

\subsection{Results}

In this section, we compare the performance of the three matchings described in Section \ref{subsect:exp}.
First, we comment how $X_{BP}$ compares with $X_{ILP}$ and then compare $X_{BP}$ with $X^{(t)}_{ILP}$.
{\bf The Pure-Quality Solutions Against the Greedy Heuristic.}
We first comment on the {Aligned case}.
Notice that, in the Aligned case, maximizing the quality and minimizing the effort are similar tasks, thus this represents the best scenario for the ILP model.
%
%
Nonetheless, we observe that our model is able to find solutions whose $QP$ is consistently larger than $95\%$, such that
\begin{enumerate*}[label=(\roman*)]
\item the $RAER$ is $10\%$ to $20\%$ lower and
\item the total effort is more evenly spread. 
\end{enumerate*}
Indeed, we observe that the $FR$ ranges from $76\%$ to $48\%$, so that the variance obtained by the ILP model is $1.3$ times the variance obtained by our model in the best case and more than double in the worst case.
We, therefore, conclude that the personal effort required from every reviewer according to the solution of the BP model is lower on average, spread amongst more reviewers, and, overall, more fairly distributed.
%
%

We now comment on the {Random case}, \textit{i.e.} the case in which there is no guaranteed relationship between the quality of a matching and the effort of the agents.
%
%
For both the uniform and exponential distribution, we observe that the $QP$ slightly drops between from $94\%$ to $90\%$.
However, the quality drop comes with a larger drop in the $RAER$ and an even larger drop in the $FR$.
Indeed, the allocation found by our heuristic halves the effort required by every reviewer in most cases, and in some instances, it achieves almost a third of the effort required by the ILP solution.
Similarly, the $FR$ drops between $0.35$ and $0.07$ hence the variance of the heuristic solution is up to $15$ times lower than the one found by the classic ILP model.
All the results of our experiments are reported in Table \ref{tab:downsample_tr}. 

%

{\bf The $t$-tuned Solutions Against the Greedy Heuristic.}
Finally, we compare our heuristic to the $t$-tuned solutions for $t = 0.05,0.1,0.15$.
As we observe from the results, taking a convex combination of the effort matrix and the quality matrix leads to solutions that better balance the quality and the effort. 
This is clear when we look at the performance on the aligned dataset: in this case, the $QP$ of both $X_{BP}$ and $X^{(t)}_{ILP}$ are comparable for every $t$.
Similarly, the $RAER$ between $X_{BP}$ and $X_{ILP}^{(t)}$, ranges between $90\%$ and $95\%$, hence the $X_{BP}$ still requires less effort to the reviewers, but only marginally.
The real difference between the two solutions can be appreciated when look at their $FR$.
Indeed, our heuristic is capable of finding a matching that spreads the effort in a more fair way amongst the reviewers.
It is again worth stressing, that the aligned case represents the best set of conditions for any ILP model, since maximizing the quality and minimizing the effort do coincide in this specific framework.
This consideration is confirmed by the results we get for the Random dataset.
In this case, the performances of $X^{(t)}_{ILP}$ is more in line with the performance of $X_{ILP}$: the loss in $RAER$ is no longer negligible and the loss in $FR$ increases.
Finally, we observe that using our model has a further advantage over the $t$-tuned solutions: our method does not require to find a tuning parameter $t$ to describe the interplay between the editor's and the reviewers' objective function.
The experiment results are listed in the Table \ref{tab:downsample}. 
%
%

%

%
{\bf Final Remarks.}
In every setting considered, we observe that the variance of our solution is noticeably lower than the one found by any ILP based model.
This is especially interesting since we are able to retrieve those fairer solutions without altering the objective function nor by determining a tuning parameter to mix the quality and the effort matrix.
Finally, we notice that our accuracy results are acceptable since the percentages we found are well above the $83\%$ obtained by other heuristic approaches, as the one proposed in \cite{DBLP:conf/ijcai/AhmedDF17}. 

\section{Conclusion and Future Works}
\label{sec:conclusion}

In this paper, we introduced an integer BP model for the peer review problem.
To the best of our knowledge, this is the first model that finds a peer-reviewing matching using this formalism.
We showed that the LLp and the ULp are well-defined problems and that solution has a neat burden-free property under mild assumptions.
Finally, we defined a heuristic solution and validated it through numerical tests.
From the results, we observe that the heuristic finds fairer solutions that require a lower average effort from the reviewers and achieves a competitive quality without requiring any parameter tuning.
For the future avenue, a first improvement to the model would be to relax the binary constraints of the allocation and search for a probabilistic allocation rather than a deterministic one.
%
%
Another interesting aspect of the peer review problem is determining the expertise of the reviewers by comparing records from conferences.

\ack{This project is partially supported by a Leverhulme Trust Research Project Grant (2021--2024). Jie Zhang is also supported by the EPSRC grant (EP/W014912/1).}

\bibliography{ecai}

\clearpage

\section*{Appendix}

In this appendix, we report the missing proofs, the additional theoretical results, and the additional experimental results of our paper.

\section*{Appendix A -- Proofs}
\label{appendix:a}

In this appendix, we report all the proofs of the results stated in the main body of the paper.

\begin{proof} (\emph{Lemma \ref{property}})
It follows from the following chain of inequalities

\begin{align*}
    \sum_{i=1}^b w_i x_i &= \sum_{i=1}^a w_i x_i + \sum_{i=a+1}^b w_i x_i\\
    &\geq \sum_{i=1}^a w_i x_i + w_a \sum_{i=a+1}^b  x_i\\
&= \sum_{i=1}^a w_i x_i + w_a  a- w_a \sum_{i=1}^a  x_i \\
&= \sum_{i=1}^a \left( w_i x_i + w_a -   w_a x_i\right)\\
&= \sum_{i=1}^a \left( (w_a - w_i)(1- x_i) +    w_i\right)\geq \sum_{i=1}^a   w_i.
\end{align*}
\end{proof}

\begin{proof}(\emph{Theorem \ref{lemma-equivelance}})
It is easy to see that the problems \eqref{lower-level-problem-primal_0} and \eqref{lower-level-problem-primal} are equivalent to the following Linear Programming problems
\begin{equation} \label{lower-level-problem-primal-equ}
\arraycolsep=1.4pt\def\arraystretch{1.25}
\begin{array}{lrl} 
\forall~j\in[m],~~& Y^*_{:,j}\in\underset{Y_{:,j} }{\rm argmin}&~\langle  W_{:,j}, Y_{:,j}\rangle,\\
&{\rm s.t.}~&~\sum_{i\in [n]}  Y_{i,j} = U_{i},\\
&\quad &0\le Y_{i,j} \leq Z_{i,j},
\end{array} 
\end{equation}
and the following Integer Linear Programming problems
\begin{equation} \label{lower-level-problem-relax-equ}
\arraycolsep=1.4pt\def\arraystretch{1.25}
\begin{array}{lrl} 
\forall~j\in[m],~~& \widehat{Y}_{:,j}\in\underset{Y_{:,j} }{\rm argmin}&~\langle  W_{:,j}, Y_{:,j}\rangle,\\
&{\rm s.t.}~&~\sum_{i\in [n]} Y_{i,j} = U_j,\\
&\quad &Y_{i,j} \leq Z_{i,j},\\
&\quad &Y_{:,j}\in \mathbb{B}_{1,m},
\end{array} 
\end{equation}
respectively. 
Therefore, to show that $\langle  W, Y^*\rangle=\langle  W, \widehat Y\rangle$, we prove $\langle  W_{:,j}, Y^*_{:,j}\rangle=\langle  W_{:,j}, \widehat Y_{:,j}\rangle$ for every $j\in[m]$. Without loss of generality, we only consider the case $j=1$.
Furthermore, we assume $Z_{i,1}=1$ for every $i\in [n]$, this can be done without loss of generality since otherwise, it suffices to restrict the sets of indices.

Since the feasible region of \eqref{lower-level-problem-primal-equ} is a superset of the feasible set of \eqref{lower-level-problem-relax-equ}, it follows $\langle  W_{:,1}, Y^*_{:,1}\rangle\geq\langle  W_{:,1}, \widehat Y_{:,1}\rangle$.
To conclude the proof, we show the other inequality.
If $\widehat{Y}_{:,1}\in  \B_{n,1}$, we conclude the proof, therefore, we assume $\widehat{Y}_{:,1}\notin  \B_{n,1}$. 
Let us now define
\begin{equation}
    \label{S1-S2}
S: = \Big\{i\in[n]~ :~ 0< \widehat Y_{i,1} <1\Big\}
\end{equation}
and
\begin{equation}
    K: = \Big\{i\in[n]~ : ~ \widehat Y_{i,1} \in\{0,1\}\Big\}.
\end{equation}
By assumption, $S\neq \emptyset$ and, by construction, we have
\begin{equation}
    \label{S1S2t}
 U_1=\sum_{i\in [n]}\widehat Y_{i,1}= \underset{i\in S}\sum \widehat Y_{i,1} +   \underset{i\in K}\sum  \widehat Y_{i,1} =: \underset{i\in S}\sum  \widehat Y_{i,1} +t.
\end{equation}
Since $\widehat Y_{i,1} \in\{0,1\}, ~\forall i \in K$, we get that $t$ is an integer number. 
Let us now define $b:=|S|$ and let $a=U_1-t>0$. From equation \eqref{S1S2t} and $ 0< \widehat Y_{1,j} <1$, we infer that $b>a$.
Now, we increasingly reorder the elements in $\{W_{i,1}:i\in S\}$, so that 
\begin{equation} 
\label{cond-0}
{W}_{i_1,1} \leq {W}_{i_2,1} \leq\cdots\leq{W}_{i_a,1}\leq \cdots \leq  {W}_{i_b,1},
\end{equation} 
where $\{i_1,\cdots,i_b\}=S$. Finally, we define $\overline{Y}$ as 
\begin{eqnarray}
\label{YYYYYYY}
\overline Y_{i,1}=\left\{
\arraycolsep=1.4pt\def\arraystretch{1.5}
 \begin{array}{llllll} 
 \widehat Y_{i,1},&~~~~ \text{for}~ i\in K,\\
 1,&~~~~ i\in \{i_1,i_2,\cdots,i_a\},\\
 0,&~~~~ \text{otherwise} 
.\end{array}
\right.
\end{eqnarray}
A simple computation shows that

 \begin{eqnarray} 
\label{cond-1}&& \sum_ {i\in S} \overline Y_{i,1} = \sum_ {i = i_1} ^{i_a} 1 =a,\\ 
\label{cond__2}&&\sum_ {i\in S} {W}_{i,1} \overline Y_{i,1}  =\sum_ {i = i_1} ^{i_a} {W}_{i,1} \\
\nonumber &&\quad\quad\quad\quad\quad\quad\leq \sum_ {i = i_1} ^{i_b}   {W}_{i,1} \widehat Y_{i,1} \leq  \sum_ {j\in S}  {W}_{i,1} \widehat Y_{i,1},
\end{eqnarray} 

which, in conjunction with relationships \eqref{YYYYYYY}, \eqref{cond__2}, and \eqref{S1S2t}, allows us to conclude

\begin{align*}
    \sum_{i\in [n]} \overline{Y}_{i,1}&= \underset{i\in S}\sum \overline{Y}_{i,1} + \underset{i\in K}\sum\overline{Y}_{i,1}\\
    &= \underset{i\in S}\sum  \overline Y_{i,1} + \underset{i\in K}\sum  \widehat Y_{i,1}= a+t= U_1,
\end{align*}

which means that $ \overline Y$ is feasible. Moreover,

 \begin{eqnarray} \label{cond-5}
\langle {W}_{:,1}, \overline Y_{:,1} \rangle&=&   \sum_ {i\in S} {W}_{i,1} \overline Y_{i,1} +  \sum_ {i\in K} {W}_{i,1} \overline Y_{i,1}    \nonumber \\
&\leq& \sum_ {i\in S} {W}_{i,1} \widehat Y_{i,1} +  \sum_ {i\in K} {W}_{i,1} \widehat Y_{i,1} ~~~~~(\text{due to  \eqref{cond__2})}\nonumber\\
 &=& \langle {W}_{:,1}, \widehat Y_{:,1} \rangle,
\end{eqnarray} 
hence $\langle {W}_{1 :}, \overline Y_{1 :} \rangle = \langle {W}_{1 :}, \widehat Y_{1 :} \rangle$.
The uniqueness result follows from the uniqueness of the non-decreasing ordering of the values $\{W_{i,1}:i\in S\}$.
\end{proof}

\begin{proof}(\emph{Theorem \ref{prop:prob_simp}})
First of all, we notice that, since we are considering a minimization problem, imposing the constraint $\sum_{i\in [m]}Y_{i,j}=U_{j}$ for every $j\in [m]$ is equivalent to impose $\sum_{i\in [m]}Y_{i,j}\ge U_{j}$ for every $j\in [m]$.
If $Y$ minimizes the effort of the reviewers component-wise, it also minimizes the total effort of the reviewers.
Toward a contradiction, assume that $Y$ minimizes the total effort of the reviewers, but does not minimize the effort of a reviewer, namely $j$.
From Theorem \ref{lemma-equivelance}, there exists a binary vector $\tilde{Y}_j$ that minimizes the effort of reviewer $j$.
Let us now define the matrix $\bar{Y}$ as it follows: every column different from the $j$-th is equal to the respective column of $Y$, while the $j$-th column is equal to $\tilde{Y}_j$.
It is easy to see that $\bar{Y}$ is still feasible and that the reviewers' effort of $\bar{Y}$ is lower than the effort of $Y$, which contradicts the hypothesis.
\end{proof}

\begin{proof}(\emph{Theorem \ref{prop:feasibility}})
Since the set of feasible triplets for the BP problem is discrete, if we show that it is also not empty, we infer that there exists an optimal solution, which concludes the proof.
Since the ILP problem is feasible, let $X$ be a solution to Problem \ref{pr:classic}.
Since we have that $\max_{j\in [m]}\;\phi_j + 2\max_{j\in [m]} \; U_j\le n$, we can find a $Z$ such that $X\le E-Z$.
Notice that, if the editor proposes $Z$, the matching $X$ is a feasible allocation for the ULp regardless of what the solution to the LLp $Y$ is.
Finally, from Theorem \ref{lemma-equivelance}, we have that there exists an optimal $Y$ that solves the LLp.
Therefore $(Z,Y,X)$ is a feasible triplet for the BP problem, which concludes the proof.
\end{proof}

\begin{proof}(\emph{Proposition \ref{prop:dots}})
Let $(X,Y,Z)$ be a perfect solution, so that $X\le Y\le Z$ holds.
Moreover, let us fix $\Phi:=(\phi_1,\dots,\phi_m)$.
Since $Z$ has at least $U_j+\phi_j$ non-null entries in every row $j\in [m]$, and since $X_{:,j}\le Z_{:,j}$ we infer that the allocation $X$ is $\Phi$-burden-free, which proves the first half of the proposition.
We now focus the second half of the proof.
Let $(X,Y,Z)$ be a perfect solution such that $X$ maximizes the quality of the matching.
Since $X$ maximizes the quality, then $X$ is a solution of the ILP defined in Problem \ref{pr:classic}.
Since we have already proven that $X$ is also $\Phi$-burden-free, this concludes the proof.
To prove the other implication, let $X$ be an optimal and $\Phi$-burden-free solution to Problem \ref{pr:classic}. 
Let $T_j$ be a set that contains $\phi_j$ papers that are worse, according to reviewer $j$, than the ones $j$ is allocated with by assignment $X$.
Since the solution is $\Phi$-burden-free, the set $T_j$ exists for every reviewer $j\in [m]$.
Finally, let us consider a feasible $Z$ such that $Z_{:,j}\ge X_{:,j}+T_j$.
Regardless of the LLp solution $Y$, it holds $X\le Y$, which concludes the second part of the proof.
\end{proof}

\begin{proof}(\emph{Proposition \ref{prop:dict}})
First, we prove that every solution to the classic ILP problem induces at least a solution to the BP problem, and then we show that also the reverse implication holds.
Let us consider $X$ a solution to the classic ILP model in Problem \eqref{pr:classic} and let $Z$ be a binary matrix such that $X\le Z$ and $\sum_{i\in [n]}Z_{i,j}=U_j$ for every $j\in [m]$.
Since every solution to Problem \eqref{pr:classic} satisfies $\sum_{i\in [n]}X_{i,j}\le U_j$ for every $j\in [m]$, such $Z$ always exists.
Let us now define the triplet $(X,Y,Z)$, where $Y=Z$.
Since $X\le Z=Y$, we infer that $\langle X,Y \rangle =\langle X,X \rangle$, hence $AC(X,Y,Z)=1$.
Moreover, since $Y=Z$ is the only feasible matrix of the LLp, it is also optimal for the LLp.
To conclude, notice that $X$ maximizes, by construction, both the functional $\langle W_E, X\rangle$ and $\langle Y,X \rangle$, therefore it is optimal for the BP problem.
Let us now consider a solution to \eqref{problem:bilevel}, namely $(X,Y,Z)$.
If $X$ is not a solution to the ILP problem, we can construct a better solution to the BP problem by using the procedure described above, which contradicts the optimality of $(X,Y,Z)$.
\end{proof}

\begin{proof}(\emph{Theorem \ref{thm:feas_cond}})
    We prove the Theorem by showing that there exists at least one feasible matching for the ULp when we fix $Z=Z_g$ and $Y=Y_g$.
    To do so, we show that there exists a feasible matching $X$ such that $X\le E-Z_g$.
    Given the bipartite graph $G=([n]\times [m], E-Z_g)$, we build the auxiliary graph $G'$ as it follows.
    For every element $i\in [n]$, we generate $l_i$ copies of $i$, we denote this set of papers with $\mathfrak{P}'$.
    Similarly, for every $j\in [m]$, we generate $U_j$ copies of $j$, we denote this set of reviewers with $\mathfrak{R}'$.
    The edge set of the bipartite graph, namely $E'$, is as it follows,: $e':=(i',j')\in E'$ if and only if $e=(i,j)\in E$, where $i'$ is a copy of $i$ and $j'$ is a copy of $j$. 
    To conclude the proof, it suffice to show that there exists a perfect matching of $\mathfrak{P}'$ into $\mathfrak{R}'$.
    First notice that $\mathfrak{P}'$ contains $L=\sum_{i\in [n]}l_i$ elements.
    This follows from the Hall's theorem \cite{kierstead1983effective}, since, by hypothesis and construction, every paper in $\mathfrak{P}'$ is connected to at least $L$ elements of $\mathfrak{R}'$.
\end{proof}

\begin{proof}(\emph{Theorem \ref{them:heur_estimate}})
It is easy to see that $$\langle W_E+ Y_{Z_g},X_{Z_g}\rangle\le\langle W_E+ Y^*,X^*\rangle,$$ where $(X^*,Y^*,Z^*)$ is the optimal solution of the BP problem.
By rearranging the terms of the latter inequality, we have
\[
AC(X_g,Y_g,Z_g)-AC(X^*,Y^*,Z^*)\le \langle W_E, X^*\rangle - \langle W_E, X_g \rangle.
\]
Since $AC(X_g,Y_g,Z_g)=1$ and $AC(X^*,Y^*,Z^*)\le 1$, we infer that the left-hand side of the equation is positive, hence
\[
0\le\langle W_E, X^*\rangle - \langle W_E, X_g \rangle \quad \to \quad \langle W_E, X_g \rangle\le\langle W_E, X^*\rangle.
\]
In particular, the quality obtained by the heuristic final assignment $X_{X_g,Y}$ gives a lower bound on the quality attained by the optimal solution of the BP problem. 
\end{proof}

\section*{Appendix B -- The Secondary Variational Problems}
\label{appendix:b}

In this appendix, we analyze the secondary variational problem (SVP) related to the peer review matching problem.
After studying the properties of the solution when we add a small regularization term, we focus on the diversity functional introduced in \cite{DBLP:conf/ijcai/AhmedDF17}.
In particular, we show how the diversity function they proposed does not directly describe the diversity, but rather an SVP induced by the entropy function.
This allows us to give a theoretical explanation of the large Entropy Gain results presented in \cite{DBLP:conf/ijcai/AhmedDF17}.

\subsection*{Enhancing the solutions Via Second Variational Problem}
Albeit maximizing a linear function has been proven to be a reliable way to find a meaningful peer review matching, determining a matching by only using an ILP model leads towards undesirable features.
First, the solution is often non-unique, which means there is still room for improving the solution.
Second, since the solutions to LP problems lay in the vertexes of a polytope, the maximizers of the objective function lack other appealing properties such as fairness, entropy, and diversity.
For example, the maximum allocation described in Example \ref{ex:fairness} allocates all the best papers to one reviewer and leaves the other reviewer with the worst one.
A classic method to deal with both these issues is to augment the objective function with a (usually convex) function as follows
\begin{equation}
    \label{eq:intro_sect4}
    \langle W,X \rangle + \lambda C(X),
\end{equation}
where $C$ is a function that describes the property we are interested in and $\lambda\in \mathbb{R}$ is a parameter that scales the effect of the function $C$ over the optimization problem.
In this appendix, we show that for every problem \eqref{eq:intro_sect4} there does exist a small enough parameter $\lambda$ for which the maximizers of the function \eqref{eq:intro_sect4} are the matchings that maximizes\footnote{or minimizes, depending on the sign of $\lambda$} the function $C$ over the set of matchings that maximize the quality $\langle W,X \rangle$. 
\begin{definition}
\label{def:model_epsilon_div}
Given a matching problem and a convex and bounded function $C$ over the set of feasible matching, we define the secondary variational problem induced by $C$ as
\begin{equation}
\label{eq:problem}
    \max_{X\in \mathcal{A}} \WWl (X) =\max_{X\in \mathcal{A}} \langle W, X\rangle + \lambda C(X),
\end{equation}
where $\mathcal{A}$ is the set of feasible points $X$ and $\lambda\in\mathbb{R}$. 
\end{definition}

Depending on the value of $\lambda$ the solution $X_\lambda$ of \eqref{eq:problem} changes.
First of all, depending on the sign of $\lambda$, the problem searches for the maximum or the minimum of $C$.
Moreover, as the module of $\lambda$ grows, the solution $X_\lambda$ gets less optimal for the LP problem and approaches the maximum or minimum of $C$.
For the sake of simplicity, in what follows, we assume $C$ to be convex and $\lambda$ to be negative, which is also the most common scenario.

\begin{proposition}
\label{prop:2}
Let $W=(w_{i,j})_{i,j}$ be an edge weight matrix such that $\min_{i,j}w_{i,j}>0$ and let $C$ be a convex and bounded function.
Then, there exists $\epsilon>0$ such that for every $\lambda\in(0,\epsilon)$ the solution to
\begin{equation}
    \label{eq:lambdasmall}
    \max_{X\in\mathcal{A}} \langle W , X \rangle -\lambda C(X)
\end{equation}
is a Maximum Edge-weighted Matching with respect to edge weight matrix $W$ that minimizes $C$.
\end{proposition}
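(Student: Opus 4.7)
The plan is to exploit the fact that $\mathcal{A}$ is a finite set (it consists of binary matrices obeying a finite list of linear constraints). Consequently, both $\langle W, \cdot \rangle$ and $C$ attain only finitely many values on $\mathcal{A}$, which makes a straightforward gap argument work — convexity of $C$ and positivity of $W$ turn out to be convenient but inessential.

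First I would define $M^* := \max_{X\in\mathcal{A}} \langle W, X\rangle$ and $\mathcal{A}^* := \{X \in \mathcal{A} : \langle W, X\rangle = M^*\}$, the set of Maximum Edge-weighted Matchings with respect to $W$. If $\mathcal{A}^* = \mathcal{A}$ there is nothing to prove, so assume otherwise and set
\begin{equation*}
\delta := M^* - \max_{X \in \mathcal{A}\setminus \mathcal{A}^*}\langle W, X\rangle > 0,
\end{equation*}
which is the gap between the optimal and the second-best matching value. By finiteness of $\mathcal{A}$ (or by the assumed boundedness of $C$), the range $R := \max_{X\in\mathcal{A}} C(X) - \min_{X\in\mathcal{A}} C(X)$ is finite. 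I would then set $\epsilon := \delta / (R + 1)$ (any positive number strictly less than $\delta/R$ works when $R>0$; if $R=0$, then $C$ is constant and the claim is immediate).

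Next I would fix $\lambda \in (0,\epsilon)$ and let $X_0 \in \mathcal{A}^*$ be any matching minimizing $C$ over $\mathcal{A}^*$. For an arbitrary $X \in \mathcal{A} \setminus \mathcal{A}^*$, the gap bound gives
\begin{equation*}
\langle W, X\rangle - \lambda C(X) \;\le\; M^* - \delta - \lambda C(X) \;\le\; M^* - \delta + \lambda\bigl(C(X_0) - C(X)\bigr) - \lambda C(X_0),
\end{equation*}
and since $C(X_0) - C(X) \le R$ and $\lambda R < \delta$, the right-hand side is strictly less than $M^* - \lambda C(X_0) = \langle W, X_0\rangle - \lambda C(X_0)$. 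Hence no $X \notin \mathcal{A}^*$ can be optimal for \eqref{eq:lambdasmall}, so every maximizer lies in $\mathcal{A}^*$. Among elements of $\mathcal{A}^*$ the first term is constant, so maximizing $\langle W,X\rangle - \lambda C(X)$ amounts to minimizing $C$, which completes the proof.

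The only subtle step is identifying the correct scale for $\epsilon$: one needs the product $\lambda R$ to stay strictly below the integrality gap $\delta$, and the main ``obstacle'' is really just making sure that $\delta$ is well-defined and positive, which is what the finite-combinatorial structure of $\mathcal{A}$ guarantees. I expect the reader will notice that convexity of $C$ plays no role here, and that the positivity hypothesis on $W$ is used only implicitly (it guarantees $\mathcal{A}^*$ is nonempty and $M^* > 0$, but nonemptiness already follows from the feasibility assumptions of Section~\ref{sec:basic_notions}).
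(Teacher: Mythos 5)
Your proof is correct, but it takes a genuinely different route from the paper's. The paper argues by contradiction and compactness: it supposes that for every $\lambda>0$ some $X_\lambda$ strictly beats the quality-optimal $\bar X$ in \eqref{eq:lambdasmall}, takes a sequence $\lambda_n\downarrow 0$, uses finiteness of $\mathcal{A}$ to extract an eventually constant subsequence $X_n = X^*$, and derives the contradiction $0<\langle W,\bar X - X^*\rangle<\lambda_n\bigl(C(\bar X)-C(X^*)\bigr)\to 0$. That argument is non-constructive (it produces no explicit $\epsilon$) and, as written, only establishes that maximizers of \eqref{eq:lambdasmall} maximize $\langle W,\cdot\rangle$, leaving the ``minimizes $C$'' clause implicit. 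Your direct gap argument --- isolating the second-best gap $\delta$ and taking $\epsilon<\delta/R$ --- proves both clauses cleanly and, as a bonus, already yields the quantitative bound that the paper has to derive separately in Proposition~\ref{prop:bound_on_eps} (for integer $W$ one has $\delta\ge 1$, so $\epsilon\ge 1/\Delta C$). Your observation that convexity of $C$ and positivity of $W$ play no real role is also accurate: the paper's own proof never uses them either, and everything rests on $\mathcal{A}$ being finite and $C$ being bounded on it. The one small caveat is your parenthetical that positivity of $W$ ``guarantees $\mathcal{A}^*$ is nonempty'' --- nonemptiness of $\mathcal{A}^*$ follows from nonemptiness and finiteness of $\mathcal{A}$ alone, so even that implicit use is vacuous; but this does not affect the validity of the argument.
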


\begin{proof}
Let $\bar{X}\in\AA$ be a minimizer of the LP problem
\[
\max_{X\in \AA}\langle W,X \rangle.
\]
Toward a contradiction, assume that $\forall\lambda>0$ there exists a $X_\lambda$ such that
\[
\langle W,\bar{X} \rangle - \lambda C(\bar{X}) < \langle W,X_\lambda \rangle - \lambda C(X_\lambda).
\]
Let us now consider a sequence, namely $\lambda_n$, such that $\lambda_n>0$ for every $n$ and that converges monotonically to $0$.
Let $X_n:=X_{\lambda_n}$ be the related sequence of solutions.
Since $\AA$ is finite, we have that, up to a sub-sequence, $X_n$ converges to some element in $\AA$, namely $X^*$, moreover, starting from a given $N$, we have $X_{n}=X^*$ for every $n>N$.
By assumption, we have that $\langle W,X^*\rangle<\langle W,\bar{X} \rangle$.
Moreover, we have
\[
\langle W,\bar{X} \rangle -\lambda_n C(\bar{X}) <  \langle W,X^*\rangle -\lambda_n C(X^*)
\]
for every $n>N$.
By manipulating the latter relation, we get
\[
0<\langle W,\bar{X}-X^* \rangle  <   \lambda_n (C(\bar{X})-C(X^*))
\]
which is a contradiction, since the left-hand side is positive and the right-hand side converges to zero since $C$ is bounded.
\end{proof}

If the matrix $W$ has integer values, we can bound the value of $\epsilon$.

\begin{proposition}
\label{prop:bound_on_eps}
If the matrix $W$ has integer entries and $C$ is a bounded and convex function, we have that $\epsilon$ is greater than $\frac{1}{\Delta C}$ where $\Delta C= \max_{\mathcal{A}} \, C - \min_{\mathcal{A}} \, C$ and $\mathcal{A}$ is the set of feasible binary matrices. 
In particular, whenever $0<\lambda<\frac{1}{\Delta C}$, the solutions of \eqref{eq:lambdasmall} are also maximizers of the LP model defined in Problem \ref{pr:classic}.
\end{proposition}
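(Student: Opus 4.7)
The plan is to exploit the integrality of $\langle W, X \rangle$ on binary matrices to sharpen the abstract argument used for Proposition \ref{prop:2}. Specifically, since $W$ has integer entries and every feasible $X$ lies in $\mathbb{B}_{nm}$, the value $\langle W, X \rangle$ is an integer, so the gap between any LP-optimal matching and any strictly suboptimal one is at least $1$.

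The core step is a contradiction argument. Fix $\lambda$ with $0 < \lambda < 1/\Delta C$, let $X_\lambda$ be a maximizer of $\langle W, X \rangle - \lambda C(X)$ over $\mathcal{A}$, and suppose toward contradiction that $X_\lambda$ is not an LP-maximizer. Pick an LP-maximizer $\bar X \in \mathcal{A}$; then by optimality of $X_\lambda$ for the augmented problem,
\begin{equation*}
\langle W, \bar X \rangle - \lambda C(\bar X) \,\le\, \langle W, X_\lambda \rangle - \lambda C(X_\lambda),
\end{equation*}
which rearranges into $\langle W, \bar X - X_\lambda \rangle \le \lambda\bigl(C(\bar X) - C(X_\lambda)\bigr) \le \lambda\, \Delta C < 1$. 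On the other hand, the left-hand side is a positive integer (since both $\bar X$ and $X_\lambda$ are binary, $W$ is integer-valued, and $\bar X$ is strictly better than $X_\lambda$ for the LP), hence at least $1$. Contradiction. This gives the ``in particular'' clause, and the bound $\epsilon \ge 1/\Delta C$ is an immediate consequence of Proposition \ref{prop:2} together with this argument.

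The only mildly subtle point is justifying that $\langle W, \bar X - X_\lambda \rangle \ge 1$ rather than merely $> 0$; this is where integrality of $W$ and binariness of $\mathcal{A}$ must both be invoked explicitly. Once that is in place, the rest is a one-line rearrangement, and the fact that the augmented maximizer actually minimizes $C$ among LP-optima is inherited directly from Proposition \ref{prop:2}. I do not expect any real obstacle, since $\Delta C$ is finite by the boundedness hypothesis on $C$ and convexity is not even used in this quantitative refinement (it was only used to guarantee well-behavedness in the previous proposition).
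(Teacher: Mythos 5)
Your argument is correct and is essentially identical to the paper's proof: both exploit the integrality of $\langle W, X\rangle$ on binary matrices to get the gap $\langle W, \bar X - X_\lambda\rangle \ge 1$, then derive the contradiction $1 \le \lambda\,\Delta C < 1$ from the optimality inequality for the augmented objective. No substantive difference.
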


\begin{proof}(\emph{Proposition \ref{prop:bound_on_eps}})
Let us consider an optimal solution $\bar{X}$ for the ILP model.
Since $W$ has integer values, for any given $X\in \AA$ that is not optimal, we have $\langle W,\bar{X}-X \rangle \ge 1$, hence we have that if there exists $X\in \AA$ that solves \eqref{def:model_epsilon_div} and that does not maximize the quantity $\langle W,X \rangle$ over $\AA$, then we have
\[
1 \le \langle W,\bar{X}-X \rangle \le \lambda\Big( C(\bar{X})-C(X) \Big)\le \lambda \Delta C
\]
which is not possible whenever $\Delta C \le \frac{1}{\lambda}$.
\end{proof}

\subsection*{The SVP in the Bilevel Formulation}

Due to the nature of the BP problems, we can add a convex penalizing function to both the ULp and the LLp, which leads us to the following BP problem.

\begin{problem}
\label{problem:bilevel_2}
In the framework described in Section \ref{sect:bilevel}, consider the following problem
\begin{eqnarray}
\label{eq:bivel_prob_pen}
\arraycolsep=2pt\def\arraystretch{1.25}
\begin{array}{ll} 
\underset{Z,X\in \mathbb{B}_{nm}}\max &\;\;\; \langle W_E,X\rangle + \langle Y^*,X \rangle + a_0 C_E(X)\\
&{\rm where }\quad l_i \le\sum_{j\in [m]}X_{i,j}\le u_i, \\
&\quad\quad \sum_{i\in [n]}X_{i,j}\le U_j,\\
&\quad\quad\sum_{i\in [n]}Z_{i,j}=U_j+\phi_j,\quad \text{and}\\
&\quad\quad X \le E-Z+Y^*\\
    &Y^* \in\underset{ Y\in \mathbb{B}_{nm}}\argmin \; \langle W_R,Y\rangle + a_j C_j(Y)\\
    &\quad{\rm where}\quad \sum_{i\in [n]}Y_{i,j}=U_j \quad \text{and}\quad 0\le Y\le Z 
\end{array} 
\end{eqnarray}
where $C_E$ is the penalizer of the ULp, $C_j$ are the penalizers of the LLp, and $\{a_k\}_{k=0,\dots,m}$ are parameters that scale the influence of the respective penalizers. 
As for the previous model, the LLp is a component-wise minimization, that is $Y^*$ is the matrix whose columns minimize the quantity $\langle (W_R)_{:,j},Y_{:,j} \rangle + a_j C_j(Y_{:,j})$.
\end{problem}


\subsection*{Some Remarks on the Diversity Function}

Finally, we study the diversity function.
To the best of our knowledge, diversity was first introduced in the context of peer review matching in \cite{DBLP:conf/ijcai/AhmedDF17}.
Given a bipartite graph $(A\cup B,A\times B)$, a function that evaluates the weights of the edges $W:A\times B\to [0,\infty)$, and a matching $X=\{X_{i,j}\}_{(i,j)\in A\times B}$, the authors define the diversity of $X$ as
\begin{equation}
    \label{eq:diversity_weight_app}
    F(X):=\sum_{i\in A}\sum_{k\in K}\Big(\sum_{j\in B_k}w_{i,j}X_{i,j}\Big)^2
\end{equation}
where $w_{i,j}=W(i,j)$ and $\BB:=\{B_k\}_{k\in K}$ is a partition of one of the sides of the bipartite graph, in this case, $B$.
They then use a greedy heuristic method to minimize $F$ over the set of constraints we described in Section \ref{sec:basic_notions}.
Through empirical experiments, they show that the minimizers of \eqref{eq:diversity_weight_app} are close to the classic optimums from a utility viewpoint.
However, their solution has a way higher level of entropy.\footnote{For the sake of convenience, the authors of \cite{DBLP:conf/ijcai/AhmedDF17} considered the minimization version of the ILP rather than the maximizing one we presented in the paper. In this case, the editor aims at minimizing the quantity $\sum_{i,j}w_{i,j}X_{i,j}$.}
Although at first sight, it might look like this model has nothing to do with the SVP, there exists a deep relation between the problem considered in \cite{DBLP:conf/ijcai/AhmedDF17} and the SVP induced by the following entropy function
\begin{equation}
    \label{eq:entropy}
    C(X):=-\sum_{i\in [n]}\sum_{k\in [K]}\Big(\sum_{j\in B_k} w_{i,j}X_{i,j}\Big)\log \Big(\sum_{j\in B_k} w_{i,j}X_{i,j}\Big).
\end{equation}
Moreover, we also show that the function $F$ defined in \eqref{eq:diversity_weight_app} is not related to the classic notion of diversity (as the next example shows) and propose a correct definition of the diversity function.
%




\begin{figure*}[ht!]
\begin{center}

\subfloat[][The initial complete bipartite graph.]{\includegraphics[width = 0.25\textwidth]{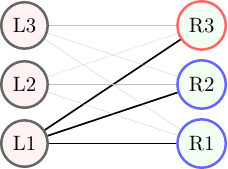}} 
\hfil
\subfloat[The graph describing the proposal $Z$ of the editor.]{\includegraphics[width = 0.25\textwidth]{fefsdsfd.pdf}}
\hfil
\subfloat[The graph describing the bidding of the reviewers $Y$.]{\includegraphics[width = 0.25\textwidth]{fefsdsfd.pdf}}
\end{center}
\caption{The graph described in Example \ref{ex:1}. The partition on the right side of the graph is described by the different colors of the circles.\label{fig:1}
}
\end{figure*}

\begin{example}
\label{ex:1}
Let us consider the bipartite graph defined by $A=\{L1,L2,L3\}$ and $B=\{R1,R2,R3\}$ (see Figure \ref{fig:1}).
Let us set $w_{i,j}=W(Li,Rj)$ where $i,j=1,2,3$ and let us assume that $w_{1,1}=w_{1,2}=0.1$ and $w_{1,3}=0.9$.
Finally, we assume that the partition over $B$ is $B_1:=\{R1,R2\}$ and $B_2=\{R3\}$.
Let us assume that every element of $A$ has to be matched to at least two items of $B$.
It is then easy to see that the optimal allocation (the one that minimizes the functional in \eqref{eq:diversity_weight_app}) of $L1$ is to allocate it to $R1$ and $R2$ since $(0.1+0.1)^2<0.1^2+0.9^2$.
However, this is counterintuitive since the diversity function should seek allocations that match elements of $A$ to as many elements of different clusters in $B$.
It is easy to generalize the example to the case of generic positive weights.
Finally, notice that if all the weights of the graph are the same (say, for example, equal to $1$), the notion of diversity is consistent with what has been proposed in \cite{DBLP:conf/ijcai/AhmedDF17}).
\end{example}
In particular, minimizing the functional \eqref{eq:diversity_weight_app} does not always lead to a diverse matching.
Motivated by the latter example, we propose the following definition of the diversity functional.

\begin{definition}
Given a bipartite graph $([n]\cup [m],[n]\times [m])$ and a partition $\BB:=\{B_k\}_{k\in K}$ over $[m]$, we define the diversity functional $\DD:\mathbb{B}_{nm}\to [0,+\infty)$ as
\begin{equation}
    \label{eq:diversity_eq_right}
    \DD(X)=\sum_{i\in [n]}\sum_{k\in [K]}\Big(\sum_{j\in B_k}X_{i,j}\Big)^2.
\end{equation}
\end{definition}

From now on, we refer to the function $F$ defined in \eqref{eq:diversity_weight_app} as the \textit{weighted diversity function}, while we refer to $\DD$ defined in \eqref{eq:diversity_eq_right} as the \textit{diversity function}.

\subsection*{A Generalized Class of Weighted Diversity Functions and the Relation with the Entropy}

In this section, we study how the minimization of the weighted diversity function $F$ defined in \eqref{eq:diversity_weight_app} is related to an SVP of the linear problem $\min_{X\in \AA} \;\langle W,X \rangle$ induced by the entropy function \eqref{eq:entropy}.
This allows us to justify the results found in \cite{DBLP:conf/ijcai/AhmedDF17}. 
We start the discussion by introducing a continuous family of functions that generalizes the weighted diversity function $F$.

\begin{definition}
Given a weight matrix $W$ and a parameter $p\ge 1$, we define the weighted $p$-diversity functional as
\[
F_p(X):=\sum_{i\in A}\sum_{k\in K}\Big( \sum_{j\in G_k}w_{i,j}X_{i,j} \Big)^p.
\]
\end{definition}

Notice that $F_2$ is equal to the weighted diversity functional $F$, defined in \eqref{eq:diversity_weight_app}. 
From the same argument used to prove Proposition \ref{prop:2}, we infer the following result.

\begin{proposition}
\label{prop:1}
Given any matching $X$, we have that $\lim_{p\to 1}F_p(X)=\langle W,X\rangle$.
Moreover, the function $p\to F_p(X)$ is continuous and, given any feasible set $\mathcal{A}$, there exists a $\Bar{p}\in(1,+\infty)$ such that, for every $1<p<\bar p$, every minimizer of $F_p$ over $\mathcal{A}$ is a minimizer of $X\to F_1(X)$ over $\mathcal{A}$.
\end{proposition}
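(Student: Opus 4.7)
The plan is to split the statement into its three parts and treat them in order of increasing substance.

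First I would check the identification $F_1(X)=\langle W,X\rangle$. Since $\{G_k\}_{k\in K}$ is a partition of $[m]$, simply unpacking the double sum gives $F_1(X)=\sum_{i\in A}\sum_{k\in K}\sum_{j\in G_k}w_{i,j}X_{i,j}=\sum_{i\in A}\sum_{j\in [m]}w_{i,j}X_{i,j}=\langle W,X\rangle$, which is the base case we need for the limit statement.

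Second, continuity of $p\mapsto F_p(X)$ for a fixed feasible $X$ follows at once from the fact that, for any nonnegative real $a$, the map $p\mapsto a^{p}$ is continuous on $[1,+\infty)$ (with the convention $0^{p}=0$), and $F_p(X)$ is a finite sum of such terms evaluated at $a=\sum_{j\in G_k}w_{i,j}X_{i,j}\ge 0$. Taking the limit $p\to 1^+$ inside the finite sum, each factor tends to $a$, so $\lim_{p\to 1}F_p(X)=F_1(X)=\langle W,X\rangle$.

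Third, and this is the only step with real content, I would prove the stability of the argmin by a finite-gap / uniform-continuity argument that parallels Proposition~\ref{prop:2}. Let $\AA^*:=\argmin_{X\in\AA}F_1(X)$ and $m_1:=\min_{X\in\AA}F_1(X)$. If $\AA^*=\AA$ the statement is trivial, so assume $\AA\setminus \AA^*\neq\emptyset$ and set
\[
\delta:=\min_{X\in \AA\setminus\AA^*}F_1(X)-m_1>0,
\]
which is strictly positive because $\AA$ is finite. By the continuity established above, for each $X\in\AA$ there exists $p_X>1$ with $|F_p(X)-F_1(X)|<\delta/3$ whenever $1<p<p_X$. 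Setting $\bar p:=\min_{X\in\AA}p_X>1$ (again a finite minimum), for every $p\in(1,\bar p)$ and every $X^*\in\AA^*$, $X'\in\AA\setminus\AA^*$ one gets
\[
F_p(X^*)<m_1+\tfrac{\delta}{3}<m_1+\tfrac{2\delta}{3}<F_p(X').
\]
Hence any minimizer of $F_p$ over $\AA$ must lie in $\AA^*$, i.e.\ is a minimizer of $F_1=\langle W,\cdot\rangle$ over $\AA$.

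The only mildly delicate point is ensuring that the uniform choice of $\bar p$ is legitimate, which hinges on the finiteness of $\AA$ (so that the minimum of the $p_X$ is attained and strictly greater than $1$) and on the strict positivity of the gap $\delta$ (which again uses finiteness of $\AA$). Everything else is just unfolding the definition, so I expect no computational obstacles.
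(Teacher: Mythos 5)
Your proof is correct. Note that the paper does not actually write out a proof of this proposition: it only remarks that the result follows ``from the same argument used to prove Proposition~4'' (the SVP stability result for $\langle W,X\rangle-\lambda C(X)$), and that argument is a proof by contradiction which extracts a sequence $\lambda_n\to 0$ of alleged counterexamples, uses the finiteness of $\mathcal{A}$ to make the sequence of minimizers eventually constant, and derives a contradiction from a strictly positive quantity being bounded by something tending to zero. Your route is a direct one: you isolate the minimal optimality gap $\delta>0$ of $F_1$ over the finite set $\mathcal{A}\setminus\mathcal{A}^*$, use pointwise continuity of $p\mapsto F_p(X)$ at $p=1$ for each of the finitely many $X$, and take a finite minimum of the thresholds $p_X$ to get a uniform $\bar p$. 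The two arguments rest on the same two pillars (finiteness of $\mathcal{A}$, hence a positive gap, and continuity in the perturbation parameter), but your direct $\delta/3$ argument is cleaner and actually stronger in one respect: it explicitly produces a single $\bar p$ valid for \emph{all} $p\in(1,\bar p)$ and for \emph{every} minimizer of $F_p$, whereas the paper's contradiction template, read literally, only rules out a particular sequence of counterexamples. Your first two steps (unpacking the partition to get $F_1(X)=\langle W,X\rangle$ and continuity of $a\mapsto a^p$ termwise) are exactly what the paper implicitly assumes.
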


In particular, we have that the minimizers of $F_p$ for $p\approx 1$ also maximize the quality of the matching.
This is in line with the results reported in \cite{DBLP:conf/ijcai/AhmedDF17}, as we have that, due to the continuity of the function $p\to F_p(X)$, the minimum of $F_2$ has to be not that different from an element of $\tilde{S}_1$.
Let us denote with $\mathcal{S}_p$ the set of minimizers of $F_p$ over $\mathcal{S}$, from Proposition \ref{prop:1}, we infer that the set $\mathcal{S}_p$ converges to a set $\tilde{S}_1$ that is a subset of the minimizers of $F_1$.
We now show that this process is related to an SVP and that the set $\tilde{S}_1$ is the subset of minimizers of $F_1$ that maximizes the entropic penalizing function introduced in \eqref{eq:entropy}.
Indeed, we recall that, given $a>0$, it holds true the following approximation formula 
\[
a^{1+\epsilon}\approx a(1 + \epsilon\log(a)),
\]
where $\epsilon>0$ is a small parameter.
The latter formula allows us to approximate $F_{1+\epsilon}(X)$ as it follows

\begin{align*}
    F_{1+\epsilon}&(X)= \sum_{i\in [n]}\sum_{k\in [K]}\Big(\sum_{j\in B_k}w_{i,j}X_{i,j} \Big)^{1+\epsilon}\\
    &\approx  \sum_{i\in [n]}\sum_{k\in [K]}\sum_{j\in B_k}w_{i,j}X_{i,j}\Big(1+\epsilon \log\big(\sum_{j\in B_k}w_{i,j}X_{i,j}\big) \Big)\\
    &=\langle W, X\rangle \\
    &\quad+\epsilon \sum_{i\in [n]}\sum_{k\in [K]}\Big(\sum_{j\in B_k}w_{i,j}X_{i,j}\Big) \log\Big(\sum_{j\in B_k}w_{i,j}X_{i,j}\Big).
\end{align*}

Therefore, the weighted diversity function becomes an SVP for the entropy of the weights allocated to the partition.
From Proposition \ref{prop:2}, we infer the following result.

\begin{proposition}
Let us consider a quality matrix $W$ such that $w_{i,j}>0$ for every $i\in [n]$ and $j\in [m]$.
Then, for every $p\in [1,\tilde{p}]$, there exists a value $\Bar{p}>1$ for which every solution to the follwing problem
\begin{equation}
    \label{eq:min_p_small}
    \min_{X\in\mathcal{A}} F_p(X), 
\end{equation}
is a minimizer of $F_1$ that maximizes the entropy function \eqref{eq:entropy}. 
\end{proposition}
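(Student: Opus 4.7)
The plan is to make rigorous the heuristic calculation carried out immediately above the proposition by combining a uniform Taylor estimate with a two-stage version of the SVP argument of Proposition 2. Writing $\alpha_{i,k}(X) := \sum_{j\in B_k} w_{i,j} X_{i,j}$ and expanding $a^{1+\epsilon} = a + \epsilon\, a \log a + \tfrac{1}{2}\epsilon^{2} a (\log a)^{2} e^{\theta \epsilon \log a}$ (with the convention $0\log 0 = 0$ covering the zero coordinates), I would first establish that
\[
F_{1+\epsilon}(X) = \langle W, X \rangle - \epsilon\, C(X) + R(X,\epsilon),
\]
with $|R(X,\epsilon)| \leq K \epsilon^{2}$ for some constant $K$ independent of $X \in \mathcal{A}$. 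Uniformity is automatic because $\mathcal{A}$ is finite, so the values $\alpha_{i,k}(X)$ range over a finite set of non-negative reals; positivity of $W$ simply ensures that the non-zero $\alpha_{i,k}(X)$ are bounded away from $0$ and $\infty$, so $\alpha(\log\alpha)^{2}$ is uniformly bounded.

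The next step is a standard gap argument at the level of $F_{1}$. Let $S_1 := \argmin_{\mathcal{A}} F_1$. By finiteness of $\mathcal{A}$ there exists $\delta > 0$ such that $F_1(X) - F_1(X^*) \geq \delta$ for every $X \in \mathcal{A}\setminus S_1$ and $X^* \in S_1$. Substituting the expansion into $F_{1+\epsilon}(X) - F_{1+\epsilon}(X^*)$ and using the uniform bounds on $C$ and on $R$, the right-hand side is at least $\delta - 2\epsilon \max_{\mathcal{A}}|C| - 2K\epsilon^{2}$, which is strictly positive for $\epsilon$ small enough. Hence every minimizer of $F_{1+\epsilon}$ on $\mathcal{A}$ must lie in $S_1$.

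Finally, I would iterate the gap argument one level down, on $S_1$, where the linear part $\langle W, X\rangle$ is constant. Setting $S^* := \argmax_{S_1} C$ and $\eta := \min\{\, C^{*}-C(X) : X \in S_1 \setminus S^{*}\,\} > 0$ with $C^{*} := \max_{S_1} C$, the expansion collapses to
\[
F_{1+\epsilon}(X) - F_{1+\epsilon}(X^{*}) \;=\; \epsilon\bigl(C(X^{*}) - C(X)\bigr) + R(X,\epsilon) - R(X^{*},\epsilon) \;\geq\; \epsilon\eta - 2K\epsilon^{2}
\]
for all $X \in S_1 \setminus S^{*}$ and $X^{*} \in S^{*}$, which is strictly positive once $\epsilon$ is small enough. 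Thus every minimizer of $F_{1+\epsilon}$ on $\mathcal{A}$ actually lies in $S^{*}$, i.e.\ it is a minimizer of $F_{1}$ that maximizes the entropy $C$. Setting $\bar p := 1 + \epsilon$ for any such $\epsilon$ delivers the conclusion.

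The main obstacle is not any single calculation but the need to choose a single $\epsilon$ small enough to respect both the gap $\delta$ between $\min F_{1}$ and the next value attained by $F_{1}$ on $\mathcal{A}$ and the gap $\eta$ between the largest and second-largest values of $C$ on $S_{1}$, while simultaneously absorbing the quadratic remainder. Finiteness of $\mathcal{A}$ makes both gaps strictly positive, so this is a purely quantitative choice, and everything else is bookkeeping that parallels the proof of Proposition~\ref{prop:2} in its minimization form.
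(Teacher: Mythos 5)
Your proof is correct, and it follows the route the paper intends: the Taylor expansion $a^{1+\epsilon}=a+\epsilon a\log a+O(\epsilon^2)$ turns $F_{1+\epsilon}$ into $\langle W,X\rangle-\epsilon C(X)$ plus a remainder, and a Proposition~2--style perturbation argument then selects, among the minimizers of $F_1$, those maximizing the entropy. The difference is one of completeness rather than of method: the paper offers no proof at all beyond the heuristic expansion and the sentence ``From Proposition~2, we infer the following result,'' which silently ignores that $F_{1+\epsilon}$ only \emph{approximately} equals the exact penalized functional to which Proposition~2 applies. Your writeup supplies exactly the two ingredients that justify this leap: the uniform $K\epsilon^2$ bound on the remainder (valid because $\mathcal{A}$ is finite and the nonzero $\alpha_{i,k}(X)$ are bounded away from $0$ and $\infty$), and the two-stage gap argument with the gaps $\delta$ (between the optimal and second-best values of $F_1$) and $\eta$ (between the best and second-best entropy values on $S_1$), both strictly positive by finiteness. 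One cosmetic remark: the proposition's statement has its quantifiers garbled (``for every $p\in[1,\tilde p]$, there exists $\bar p>1$\dots''); your proof correctly establishes the intended reading, namely that there exists $\bar p>1$ such that the conclusion holds for all $p\in(1,\bar p)$ — note that it genuinely fails at $p=1$, where minimizers of $F_1$ need not maximize the entropy.
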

Once again, this result is in line with the results presented in \cite{DBLP:conf/ijcai/AhmedDF17}, where the authors show that minimizing $F_2$ instead of the classical linear function ($F_1$ according to our notation) leads to solutions that have a higher entropy while still retaining a high overall quality.
Following this idea, by applying the algorithm to the $p$-diversity functional with $p\in (1,2)$, we should find a solution that has a better efficiency from a matching perspective and that does not alter the entropy gain described in \cite{DBLP:conf/ijcai/AhmedDF17} for the case $p=2$.
%


\section*{Appendix C -- Additional Experiment Results}
\label{appendix:c}

In this appendix, we report the missing  experimental results.
%

In Table \ref{tab:downsample_tr_U6a} and \ref{tab:downsample_tr_U6}, we report the results for $U=6$.
We observe that the results are in accordance with the ones found in Section \ref{sec:num_exp}. For the sake of conciseness, we comment only on the comparison between $X^{(t)}_{ILP}$ and $X_{BP}$. In this case, we have that 
\begin{itemize}
    \item in the Aligned case the quality and the fairness obtained by $X_{BP}$ and $X_{ILP}^{(t)}$ are comparable, even though our heuristic finds slightly fairer solutions. The main difference, in this case, lies in how the effort is spread amongst the reviewer: $X_{BP}$ finds solution whose variance is smaller than any $X^{(t)}_{ILP}$.
    \item In both the random cases, our solution finds solutions whose overall effort is much smaller than the one required by the $X^{(t)}_{ILP}$. Again, the solutions found by our heuristic are much more fair than any $t$-tuned solution.
\end{itemize}

\begin{table*}[t]
\begin{center}
\caption{Quantitative results for different values of $\phi$ and differently generated effort matrices. 
Every column represents a different framework and is characterized by the effort matrix, the random variable used to generate the matrices, and the degree of freedom.
For each framework, we report the averages over $250$ instances of the $QP$ (Quality Percentage), $RAER$ (Reviewers' Average Effort Ratio), $FR$ (Fairness Ratio), and the $AC$ (Accordance Percentage).\label{tab:downsample_tr_U6a}}
%
\begin{tabular}{  c | @{}c@{}|@{}c@{} | @{}c@{}|@{}c@{} | @{}c@{}|@{}c@{}|@{}c@{}|@{}c@{}|@{}c@{}|@{}c@{} |@{}c@{} | @{}c@{} }
  \hline
  & \multicolumn{6}{c |}{Alligned } & \multicolumn{6}{ c }{Random } \\
  \cline{2-13}
  U=6 & 
  \multicolumn{3}{c|}{$\sigma=0.1$} & \multicolumn{3}{c |}{$\sigma=0.3$} & \multicolumn{3}{c|}{$X\sim \mathcal{U}[0,1]$}  & \multicolumn{3}{c }{$X\sim Exp(0.5)$} \\
  \cline{2-13}
  & $\;\;\phi=3\;\;$ & 
  $\;\;\phi=4\;\;$ & $\;\;\phi=6\;\;$ & 
  $\;\;\phi=3\;\;$ & $\;\;\phi=4\;\;$ & 
  $\;\;\phi=6\;\;$ & $\;\;\phi=3\;\;$ & 
  $\;\;\phi=4\;\;$ & $\;\;\phi=6\;\;$ & 
  $\;\;\phi=3\;\;$ & $\;\;\phi=4\;\;$ & 
  $\;\;\phi=6\;\;$ \\
  \hline
  $QP(X_{BP})$ & 0.99 & 0.98 & 0.98 & 0.97 & 0.97 & 0.96 & 0.94 & 0.93 & 0.90 & 0.94 & 0.93 & 0.90\\
  $RAER(X_{BP},X_{ILP})$ & 0.93 & 0.93 & 0.92 & 0.88 & 0.87 & 0.85   & 0.66 & 0.59 & 0.49 & 0.48 & 0.42 & 0.32 \\
  $FR(X_{BP},X_{ILP})$ & 0.71 & 0.70 & 0.68 & 0.64 & 0.61 & 0.57 & 0.42 & 0.34 & 0.24 & 0.22 & 0.16 & 0.09 \\
  $AC$ & 0.95 & 0.94 & 0.95 & 0.94 & 0.94 & 0.95 & 0.96 & 0.97 & 0.98 & 0.96 & 0.97 & 0.98\\
  \hline
 
\end{tabular}

\end{center}
\end{table*}

\begin{table*}[ht!]
\begin{center}
\caption[Experiments]{Quantitative results for different values of $\phi$ and differently generated effort matrices. 
Every column represents a different framework and is characterized by the effort matrix, the random variable used to generate the matrices, and the degree of freedom.
%
%
For each framework, we report the averages over $250$ instances of the $QP$ (Quality Percentage), the $RAER$ (Reviewers' Average Effort Ratio), and the $FR$ (Fairness Ratio).
We recall that the quality percentage is computed with respect to the maximum achievable quality, \textit{i.e.} the one achieved by $X_{ILP}$.
The $RAER$ and the $FR$ are instead computed by comparing $X_{BP}$ and $X^{(t)}_{ILP}$. \label{tab:downsample_tr_U6}
}

\begin{tabular}{  @{}c | c | @{}c@{}|@{}c@{} | @{}c@{}|@{}c@{} | @{}c@{}|@{}c@{}|@{}c@{}|@{}c@{}|@{}c@{}|@{}c@{} |@{}c@{} | @{}c@{} }
  \hline
   \multicolumn{2}{c |}{ } &  \multicolumn{6}{c |}{Alligned } & \multicolumn{6}{ c }{Random } \\
  \cline{3-14}
  \multicolumn{2}{c |}{ U=6 } & 
  \multicolumn{3}{c|}{$\sigma=0.1$} & \multicolumn{3}{c |}{$\sigma=0.3$} & \multicolumn{3}{c|}{$X\sim \mathcal{U}[0,1]$}  & \multicolumn{3}{c }{$X\sim Exp(0.5)$} \\
  \cline{3-14}
  \multicolumn{2}{c |}{} & $\;\;\phi=4\;\;$ & 
  $\;\;\phi=6\;\;$ & $\;\;\phi=8\;\;$ & 
  $\;\;\phi=4\;\;$ & $\;\;\phi=6\;\;$ & 
  $\;\;\phi=8\;\;$ & $\;\;\phi=4\;\;$ & 
  $\;\;\phi=6\;\;$ & $\;\;\phi=8\;\;$ & 
  $\;\;\phi=4\;\;$ & $\;\;\phi=6\;\;$ & 
  $\;\;\phi=8\;\;$ \\
  \hline
  \parbox[t]{2mm}{\multirow{3}{*}{\rotatebox[origin=c]{90}{$t=0.05$}}}& $QP(X_{ILP}^{(t)})$ & 0.99 & 0.99 & 0.99 & 0.99 & 0.99 & 0.99 & 0.99 & 0.99 & 0.99 & 0.98 & 0.97 & 0.97\\
  & $RAER(X_{BP},X_{ILP}^{(t)})$ & 0.94 & 0.95 & 0.94 & 0.92 & 0.90 & 0.89 &  0.74 & 0.66  & 0.55 & 0.63 & 0.55 & 0.42 \\
  & $FR(X_{BP},X_{ILP}^{(t)})$ & 0.73 & 0.72 & 0.69 & 0.69 & 0.66 & 0.62  & 0.52 & 0.41  & 0.28  & 0.36 & 0.27 & 0.16 \\
  \hline
  \parbox[t]{2mm}{\multirow{3}{*}{\rotatebox[origin=c]{90}{$t=0.1$}}}& $QP(X_{ILP}^{(t)})$ & 0.99 & 0.99 & 0.99 & 0.99 & 0.99 & 0.99 & 0.99 & 0.99 & 0.99 & 0.97 & 0.97 & 0.97\\
  & $RAER(X_{BP},X_{ILP}^{(t)})$ & 0.95 & 0.95 & 0.94 & 0.93 & 0.92 & 0.90 & 0.82 & 0.73  & 0.61 & 0.77 & 0.66 & 0.51 \\
  & $FR(X_{BP},X_{ILP}^{(t)})$ & 0.73 & 0.72 & 0.69 & 0.74 & 0.70 & 0.65  & 0.58 & 0.47  & 0.32  & 0.52 & 0.40 & 0.23 \\
  \hline
  \parbox[t]{2mm}{\multirow{3}{*}{\rotatebox[origin=c]{90}{$t=0.15$}}}& $QP(X_{ILP}^{(t)})$ & 0.99 & 0.99 & 0.99 & 0.99 & 0.99 & 0.99 & 0.98 & 0.98 & 0.98 & 0.96 & 0.97 & 0.95\\
  & $RAER(X_{BP},X_{ILP}^{(t)})$ & 0.95 & 0.95 & 0.94 & 0.96 & 0.95 & 0.93 & 0.92 & 0.83  & 0.69 & 0.93 & 0.80 & 0.62 \\
  & $FR(X_{BP},X_{ILP}^{(t)})$ & 0.75 & 0.73 & 0.70 & 0.78 & 0.74 & 0.68  & 0.68 & 0.54  & 0.37  & 0.75 & 0.55 & 0.33 \\
  \hline
\end{tabular}

\end{center}
\end{table*}

\end{document}